\def\ea{\end{align}}
\def\be{\begin{equation}}
\def\ee{\end{equation}}
\def\bea{\begin{eqnarray}}
\def\eea{\end{eqnarray}}
\newtheorem{theorem}{Theorem}
\newtheorem{algorithm}[theorem]{Algorithm}
\def\fft#1#2{{\frac{#1}{#2}}}
\begin{document}

\hbox{}

\title{On Quantum Simulation Of Cosmic Inflation}
\author[a]{Junyu Liu}
\author[b]{Yue-Zhou Li}

\affiliation[a]{Walter Burke Institute for Theoretical Physics and Institute for Quantum Information and Matter, California Institute of Technology, Pasadena, CA 91125, USA}
\affiliation[b]{Physics Department, McGill University 3600 University Street, Montreal, QC, H3A 2T8 Canada}

\abstract{In this paper, we generalize Jordan-Lee-Preskill, an algorithm for simulating flat-space quantum field theories, to 3+1 dimensional inflationary spacetime. The generalized algorithm contains the encoding treatment, the initial state preparation, the inflation process, and the quantum measurement of cosmological observables at late time. The algorithm is helpful for obtaining predictions of cosmic non-Gaussianities, serving as useful benchmark problems for quantum devices, and checking assumptions made about interacting vacuum in the inflationary perturbation theory.

Components of our work also include a detailed discussion about the lattice regularization of the cosmic perturbation theory, a detailed discussion about the in-in formalism, a discussion about encoding using the HKLL-type formula that might apply for both dS and AdS spacetimes, a discussion about bounding curvature perturbations, a description of the three-party Trotter simulation algorithm for time-dependent Hamiltonians, a ground state projection algorithm for simulating gapless theories, a discussion about the quantum-extended Church-Turing Thesis, and a discussion about simulating cosmic reheating in quantum devices.}
\maketitle

\newpage
\section{Introduction}
Cosmic inflation is probably the most famous paradigm describing the physics of the very early universe immediately after the big bang. Based on solid predictions from general relativity and quantum field theory in the curved spacetime, the theory of cosmic inflation successfully describes a homogenous and isotropic universe measured by the Cosmic Microwave Background (CMB) and Large Scale Structure (LSS). In the standard theory of inflation, the exponential expansion of the universe is driven classically by a massless scalar field, so-called inflaton, leading to a map of thermal CMB photons, while cosmic perturbations above classical trajectories of the inflaton are predicted by a usually weakly-coupled quantum field theory in the nearly de Sitter spacetime (see references about cosmic inflation \cite{Guth:1980zm,Linde:1981mu,Albrecht:1982wi,Starobinsky:1980te,Sato:1980yn,Fang:1980wi,Mukhanov:1981xt,Starobinsky:1982ee,Guth:1982ec,Bardeen:1983qw,Mukhanov:1990me,Maldacena:2002vr,Linde:2005ht,Chen:2006nt,Cheung:2007st,Weinberg:2008hq,Malik:2008im,Chen:2009zp,Baumann:2009ds,Chen:2010xka,Wang:2013zva,Baumann:2014nda}). Moreover, the complete description of the history of the universe after inflation, include reheating \cite{Abbott:1982hn,Dolgov:1982th,Albrecht:1982mp,Traschen:1990sw,Kofman:1994rk,Shtanov:1994ce}, electroweak phase transition and baryogenesis \cite{Sakharov:1967dj,Kuzmin:1985mm,Shaposhnikov:1987tw,Cohen:1993nk}, etc., might be related to the consequences of inflation. A similar exponential growth paradigm is also happening in the late histories of the universe: the dark energy \cite{Riess:1998cb,Carroll:2000fy,Peebles:2002gy}.

From its born, the theory of cosmic inflation with its corresponding cosmic perturbation has obtained great achievement in predicting cosmological observations. However, there are still a large number of open problems remaining for our generations, making it yet an active research direction both in theory and in observations. Until now, inflation is not fully confirmed by observations (for instance, lacking a relatively large tensor-to-scalar ratio makes it relatively hard to get distinguished from other theories). Even if the theory is correct, we still don't know what the precise dynamics of inflaton and other possible particles in the very early universe is. A more accurate, quantitative picture after inflation is not fully understood. Theoretical discussions about a full quantum gravitational description of de Sitter space are still working in progress, with the help of technologies from string theory, holography and AdS/CFT \cite{Witten:2001kn,Strominger:2001pn,Maloney:2002rr,Alishahiha:2004md,Arkani-Hamed:2015bza,Dong:2018cuv,Lewkowycz:2019xse,Chen:2020tes,Hartman:2020khs}. A full resolution of a possible cosmic singularity and a comprehensive analysis of de Sitter quantum gravity might rely on more precise, hardcore progress about string theory and Planckian physics, which are still in active development \cite{Kachru:2003aw}.

On the other hand, the rapid development of quantum information science provides a lot of opportunities for the next generation of possible computing technologies, and many of them also belong to fundamental physicists. At present and also in the near future, we have the capability to control 50-100 qubits and use them to perform operations, but the noise of quantum systems limits the scale of quantum circuits \cite{preskill2018quantum,arute2019quantum}. In the long run, we hope to be able to manufacture universal, fault-tolerant quantum computing devices that can perform precise calculations on specific problems and achieve quantum speedup. As high-energy physicists, we hope that future quantum computing devices can be used in the research of high-energy physics. Therefore, designing and optimizing quantum algorithms for a given high-energy physical process has gradually become a hot research field \cite{Jordan:2011ne,Jordan:2011ci,jordan2014quantum,jordan2018bqp,Preskill:2018fag}. At the same time, designing quantum algorithms for basic physical processes also has far-reaching theoretical significance. According to the quantum-extended Church-Turing Thesis, any physical process occurring in the real world can be simulated by a quantum Turing machine. Therefore, studying quantum algorithms for simulating physical processes can help support or disprove the quantum-extended Church-Turing Thesis. In practice, designing and running quantum algorithms from fundamental physics could also help us benchmark our near-term quantum devices.

In this paper, we initialize the study on extending the Jordan-Lee-Preskill algorithm for particle scattering \cite{Jordan:2011ne,Jordan:2011ci} towards inflationary spacetime. The Jordan-Lee-Preskill algorithm is a basic paradigm for simulating quantum field theory processes in a universal quantum computer. The algorithm constitutes encoding, initial state preparation, time evolution, and measurement. The algorithm is designed to run in the polynomial time, scaling with system size, consistent with the quantum-extended Church-Turing Thesis. Here in this work, we discuss the extension of the algorithm from particle scattering and evaluation of scattering matrix in the flat-space quantum field theory, to the computation of correlation functions of scalar fields in the inflationary background, connecting with the observables of CMB and LSS we discuss in cosmology.

In modern cosmology, the cosmic perturbation theory is constructed by a massless, weakly-coupled quantum field theory living in a nearly de-Sitter spacetime, written in the Friedmann-Robertson-Walker (FRW) metric. The free theory describes a Gaussian perturbation above the thermal spectrum of the CMB map, while the small non-Gaussian corrections are encoding the information, for instance, about gravitational nonlinearities and other fundamental fields appearing during inflation, which could in principle be detected by future experiments. Cosmic perturbation theory treated those perturbations in the standard quantum field theory language. Although in this paper, we only discuss scalar and curvature perturbations, other modes (vector and tensor) modes have also been studied significantly, and some of them are closely related to primordial gravitational waves.

So, why do we simulate a weakly-coupled field theory of the inflationary cosmology? In the standard description, the weakly-coupled nature is motivated by observations: we didn't observe large non-Gaussianity of the CMB map. Since correlation functions could be computed by Feynman diagrams in the weakly-coupled quantum field theory, at least at tree level, we should already know the answer. However, we point out the following reasons, showing that it is, in fact, important to simulate such theories.
\begin{itemize}
\item This work could serve as a starting point for understanding the Jordan-Lee-Preskill algorithm in the curved spacetime. Simulating quantum field theories in the curved space is an interesting topic, closely related to many open problems about quantum gravity. However, instead of considering general spacetime, we could treat de Sitter space with fruitful symmetric structures as a warmup example.
\item When computing correlation functions, we are using the in-in formalism in the cosmic perturbation theory. The formalism is useful to determine correlation functions reliably under certain physical assumptions, while further implications of the formalism and systematic understandings of cosmic non-Gaussianities, closely related to the nature of the interacting vacuum of quantum field theories in de Sitter space, is still under active research. We believe that our quantum simulation algorithm will have theoretical significance: it will help us validate the assumptions we made about the in-in formalism, and explore, estimate, and bound rigorous errors away from assumptions. (More details about this will be given in Section \ref{inin}.)
\item Strongly-coupled quantum field theory in the curved spacetime is an interesting topic to study and simulate by itself, despite the lack of phenomenological consequences. For instance, we might be interested if we have different phases and phase transitions in those theories.
\item It also might be interesting to study loop diagrams beyond the tree level diagrams in the weakly-coupled theories. Thus, quantum simulation of those theories could justify and enrich our understandings about cosmic perturbation theories at higher loops \cite{Weinberg:2005vy,Senatore:2009cf,Senatore:2012nq,Pimentel:2012tw}.
\item This work might be regarded as a starting point for studying quantum simulation of quantum field theories in the FRW spacetime, which is interesting for other cosmic phases. For instance, cosmic (p)reheating and cosmic phase transitions in the early universe might be described by strongly-coupled quantum field theories, which require the computational capability of quantum machines. Thus, this work could provide experiences, insights and intuitions for future studies of quantum simulation of other cosmic phases and phase transitions in those theories (in Appendix \ref{reheating}, we discuss a potential simulation problem about cosmic reheating.).
\item Explicit construction of the quantum algorithm simulating inflationary dynamics could support the statement of the quantum-extended Church-Turing Thesis. It is of great interest to study how the quantum-extended Church-Turing Thesis is compatible with general relativity since the definition of complexity requires a definition of time coordinates. Thus, our work about inflationary spacetime could provide such an example.
\item We could also take the advantage that we already know some aspects of inflationary perturbation theory. In fact, if we could really run the algorithm and simulate the dynamics in the quantum device, we could check the answer from some of our theoretical expectations. Thus, quantum simulation of such theories will be helpful for us to benchmark our quantum devices.
\item Explicit construction of such algorithms could also benefit our future study about the nature of quantum gravity. For instance, measuring some similar correlation functions in AdS could be interpreted as holographic scattering experiments, which might be reduced to flat-space scattering amplitudes in the limit of the large AdS radius. Such experiments are important for understanding the nature of holographic theories, conformal bootstrap, and the consistency requirements of holographic CFTs \cite{Gary:2009ae,Heemskerk:2009pn,Giddings:2009gj,Fitzpatrick:2010zm,ElShowk:2011ag,Fitzpatrick:2011hu,Fitzpatrick:2011dm,Maldacena:2015iua}.
\item It is fun.
\end{itemize}
In this paper, we will construct a full quantum simulation algorithm for measuring correlation functions of the following form
\begin{align}
\left\langle\Omega_{\text {in }}\left(t_{0}\right)\left|\mathcal{O}_{H}(t)\right| \Omega_{\text {in }}\left(t_{0}\right)\right\rangle~,
\end{align}
in an interacting scalar quantum field theory living in the four-dimensional inflationary spacetime. The state $\ket{\Omega_{\text{in}} (t_0)}$ is the vacuum state of the full interacting theory, and $\mathcal{O}_{H}(t)$ is the Heisenberg operator made by multiple scalar fields. In this theory, the Hamiltonian is time-dependent, so we denote $t_0$ as the time when inflation starts, while $t$ as the time when inflation ends. The meaning of this correlation function will be precisely given in the later discussion in Section \ref{inin}, and in fact, most experimental observables are given by the above field theory constructions.

The algorithm contains (adiabatic) initial state preparations, time evolution/particle scattering process, and measurement, and is argued to be polynomial, analogous to the Jordan-Lee-Preskill experiment. Moreover, since the theory and the simulation targets are very different from flat-space quantum field theories, it is necessary to include new ingredients appearing in the theory, modifying the Jordan-Lee-Preskill experiment. Here, as an outline of this paper, we will summarize the following most crucial points.
\begin{itemize}
\item Inflation in a lattice. We systematically construct the lattice quantum field theory of curvature perturbation, both for the free theory and the interacting theory in the Heisenberg picture. Furthermore, we compute the Bogoliubov transformation to diagonalize the Hamiltonian and determine the gap of the free system both for the continuum and the lattice field theories. The estimate of the mass gap is necessary for us to perform the adiabatic state preparation.

\item Adiabatic state preparation to find the interacting vacuum. In our algorithm, similar to Jordan-Lee-Preskill, the time evolution is used twice. We firstly use the adiabatic state preparation to find the interacting vacuum at the initial time $t_0$, and then we use the time-dependent Trotter simulation to simulate the time evolution in the Heisenberg picture. The treatment will help us determine the nature of the interacting vacuum in the curved spacetime.

\item Encoding from the causal structure. When we are simulating the Heisenberg evolution operator, we fix the computational basis to be the field basis at the initial time $t_0$. Then during the time evolution, we wish to encode our Hamiltonian for a general time $t$. We define the basis we use in the Heisenberg Hamiltonian for a general time $t$ by the free theory at $t_0$. Thus, we need to express the field operator at $t$ as a linear combination of field operators at $t_0$. The integration kernel is related to the Green's identity in the operator context. When we are using the conformal time, the past light cone is the same as the flat space, which is supporting the integration kernel, since the spacetime is conformally flat. This could be regarded as a toy version of the Hamilton-Kabat-Lifschytz-Lowe (HKLL) formula in AdS/CFT \cite{Hamilton:2005ju,Harlow:2018fse}.

\item Encoding bounds from the effective field theory (EFT) scale. In this paper, we choose the field basis to do our quantum simulation. In the original algorithm \cite{Jordan:2011ne,Jordan:2011ci}, the number of qubits we need to encode in this basis, is bounded by the scattering energy of the scattering experiments. However, in this paper, we do not naively have a quantity like scattering energy in the flat space. Instead, we use bounds of field (field momentum) fluctuations from an EFT scale, $\Lambda$, which sets the energy scale of cosmic inflation, combining with some other physics in cosmology. The setup of the value $\Lambda$ is from the inflationary physics we study.

\item Exponential expansion and the Trotter errors. When applying the product formula (Lie-Trotter-Suzuki formula) to the inflationary Hamiltonian discretized in a lattice, we notice that there is a dependence of Trotter errors from the scale factor, which increases exponentially measured in the physical time. In the conformal time, the error grows polynomially. The total time of inflation or the e-folding number could set limits on the accuracy of our computation and number of quantum gates.
\end{itemize}
This paper is organized as the following. In Section \ref{pre}, we discuss some inflationary quantum field theories and their lattice versions. In Section \ref{inin}, we discuss the in-in formalism, the basic method for computing correlation functions in the inflationary perturbation theory. In Section \ref{algorithm}, we discuss the details of our algorithm, including encoding, state preparation, time evolution, measurement, and error estimates. In Section \ref{remark}, we make some final remarks about quantum simulation of cosmic inflation, and more generic perspectives about quantum information theory, quantum gravity, and cosmology related to observations. In the appendices, we make some further comments about computation and quantum gravity \ref{MH}, and some comments about quantum simulation of cosmic reheating \ref{RH}.
\newpage

\section{The theory of inflationary perturbations}\label{pre}
In this section, we will review some basics about the theory we will simulate: a simple real massless scalar field moving in the (approximate) de Sitter background written in the FRW coordinate. We will describe the free theory, the lattice version, and interactions. In this section, some of the materials are standard, while some of them are new. In order to make this paper, we feel we have to introduce the full setup and add careful comments about our treatment, and we feel that this material might be helpful for readers. However, for a more detailed study, see some standard review articles \cite{Mukhanov:1990me,Linde:2005ht,Malik:2008im,Baumann:2009ds,Chen:2010xka,Wang:2013zva}.
\subsection{The free theory}
The inflationary spacetime is given by the following metric
\begin{align}
{d}{s^2} = {a^2}(\tau )\left[ { - d{\tau ^2} + d{{\bf{x}}^2}} \right]~.
\end{align}
Here, $\bf{x}$ is the coordinate as the three-vector. $\tau$ is called the conformal time, where the metric is written in the conformally flat form. The scale factor $a(\tau)$ is given by
\begin{align}
a(\tau ) =  - \frac{1}{{H\tau }}~,~~~~~~a(t)= a_0 e^{H(t-t_0)}~,
\end{align}
where $H$ is the Hubble parameter\footnote{We abuse the notation here, where $H$ is the Hubble constant, not the Hamiltonian.}, $a_0$ is the initial scale factor with the initial time $t_0$. The conformal time $\tau$ is related to the physical time $t$ by
\begin{align}
dt = a d\tau~.
\end{align}
At the beginning of inflation, we have $\tau_0 \sim -\infty$, while at the end, we have $\tau_\text{end}\sim 0^-$. Thus, at the beginning of the inflation, which corresponds to the big bang, we have $a(\tau_0) \sim 0$. In the end, we have a scale factor, which is very large. We often use the $e$-folding number
\begin{align}
N = \log \frac{{{a_{{\rm{end}}}}}}{{{a_{{0}}}}}~,
\end{align}
to measure the time of inflation.

Now, we consider a massless scalar field $\sigma$ moving in this background, with the action
\begin{align}
&S = \int {{d^3}} xdt\left( {\frac{{{a^3}}}{2}{{\dot \sigma }^2} - \frac{a}{2}{{\left( {{\partial _i}\sigma } \right)}^2}} \right) = \int {{d^3}} xd\tau \frac{{{a^2}}}{2}\left( {\sigma {'^2} - {{\left( {{\partial _i}\sigma } \right)}^2}} \right)~,
\end{align}
where $\partial_i$ denotes the spatial derivative. Furthermore, we define $A'=\partial_\tau A$ and $\dot A= \partial_t A$ for variable $A$. Since it is in the free theory, it could be exactly solved. The solution of the field operator in the Heisenberg picture reads
\begin{align}
\sigma (\tau ,{\bf{x}}) = \int {\frac{{{d^3}k}}{{{{(2\pi )}^3}}}} \left( {{u_k}{a_{\bf{k}}} + u_k^*a_{ - {\bf{k}}}^\dag } \right){e^{i{\bf{k}} \cdot {\bf{x}}}}~.
\end{align}
Here, $a$ and $a^\dagger$ are creation and annihilation operators
\begin{align}
\left[ {{a_{\bf{k}}},{a_{{\bf{k}}'}}} \right] = 0,\quad \left[ {{a_{\bf{k}}},a_{{\bf{k}}'}^\dag } \right] = {(2\pi )^3}{\delta ^3}\left( {{\bf{k}} - {\bf{k}}'} \right)~,
\end{align}
$u$ satisfies the classical equation of motion and $k \equiv \abs{\bf{k}}$\footnote{We apologize that we abuse the notation such that $\int d^3 k$ means the integration over the three-vector $\bf{k}$. Similar definition works for $x$ and $\bf{x}$. We will also use $k_i$ to denote the $i$-th component $k_i=\mathbf{k} \cdot  \mathbf{r}_i$, where $\mathbf{r}_i$ is the $i$-th unit vector.}. The definition of the mode function $u$, and the definition of the vacuum, have some ambiguities due to the nature of the curved spacetime. A canonical choice is so-called the Bunch-Davies vacuum \cite{Bunch:1978yq},
\begin{align}
{u_k}(\tau ) = \frac{H}{{\sqrt {2{k^3}} }}(1 + ik\tau ){e^{ - ik\tau }}~.
\end{align}
In the curved spacetime, the quantization defined by the creation and annihilation operators are analogous to the flat-space answer: the mode $\bf{k}$ is created from the creation operator $a^\dagger_{\bf{k}}$.

How is this related to the cosmic perturbation theory? In fact, the free field $\sigma$ here we are considering could be understood as the fluctuation of the inflaton field $\delta \phi (t,\bf{x})$, where
\begin{align}
\phi (t,{\bf{x}}) = \bar \phi (t) + \delta \phi (t,{\bf{x}})~.
\end{align}
Here, $\bar{\phi}(t)$ is the homogenous and isotropic classical background that is not related to the coordinate, and $ \delta \phi $ is understood as a quantum fluctuation (an operator) following the rules of quantum field theories. A similar treatment is done for computing thermal radiations of the black hole, where we are also expanding fluctuations around classical curved geometries.

In principle, cosmic perturbation theory requires a full consideration by expanding the metric, together with the inflationary field as well. Then, from the full expansion, one could decompose perturbations into different components based on the spin. In geometry, the curvature perturbation will come together with the inflaton, and there are ambiguities made by redefinitions of coordinates and fields. The redundancy could be fixed by the gauge choice. A standard and convenient choice is to replace the role of $\delta \phi $ by another variable $\zeta$, which represents the curvature perturbation, and this gauge is called the $\zeta$-gauge. Under the $\zeta$-gauge, we could write down the second-order action\footnote{We use the unit such that we set the Planck mass $M_{\text{pl}}=1$.}
\begin{align}
S = \epsilon \int d t{d^3}x\left[ {{a^3}{{\dot \zeta }^2} - a{{({\partial _i}\zeta )}^2}} \right]~,
\end{align}
where $\epsilon$ is so-called the slow-roll parameter
\begin{align}
\epsilon  \equiv  - \frac{{\dot H}}{{{H^2}}} \approx {\epsilon _V} = \frac{1}{2}{\left( {\frac{{\partial_{\bar{\phi}} V}}{V}} \right)^2} \approx \sqrt {\frac{{{{\dot {\bar {\phi}} }^2}}}{{2{H^2}}}}~,\label{eq: slow roll}
\end{align}
and $V=V(\bar{\phi})$ is the inflationary potential\footnote{Roughly speaking, one could interpret $\zeta$ as $\zeta  =  - H\delta \phi /\dot{\bar{\phi}} $.}.

More general single field inflationary models have the following second-order action in the $\zeta$-gauge,
\begin{align}
S = \epsilon \int d t{d^3}x\left[ {{a^3}\frac{1}{{c_s^2}}{{\dot \zeta }^2} - a{{(\partial_i \zeta )}^2}} \right]~,\label{action of inflation}
\end{align}
where $c_s$ is called the sound speed \cite{Chen:2006nt}\footnote{Defining the Lagrangian density
\begin{align}
L \supset \sqrt { - g} P(\phi,X)~,
\end{align}
for the whole inflaton field before perturbation, where $X \equiv -(\partial_\mu \phi \partial^\mu\phi) / 2$, we could define the sound speed by
\begin{align}
c_{s}^{2}=\frac{P_{, X}}{P_{, X}+2 X P_{, X X}}~.
\end{align}
}. In this case, the mode expansion reads
\begin{align}
\zeta (\tau ,{\bf{x}}) = \int {\frac{{{d^3}k}}{{{{(2\pi )}^3}}}} \left( {{v_k}{a_{\bf{k}}} + v_k^*a_{ - {\bf{k}}}^\dag } \right){e^{i{\bf{k}} \cdot {\bf{x}}}}~,\label{mode expansion}
\end{align}
where
\begin{align}
{v_k} = \frac{H}{{\sqrt {4\epsilon {c_s}{k^3}} }}\left( {1 + ik{c_s}\tau } \right){e^{ - ik{c_s}\tau }}~.
\end{align}
This action is exactly the free theory piece of the whole action we wish to simulate as the simplest example.
\subsection{The lattice free theory}\label{latticesec}
Now, motivated by the goal of quantum simulation, we define our theory in the lattice. The lattice version in this paper means that we are discretizing the spatial direction, replacing
\begin{align}
{\Omega _3} = b\mathbb{Z}_{\hat L}^3~,
\end{align}
where $b$ is the lattice spacing, $L$ is the total length of a single direction, and
\begin{align}
\hat{L}=\frac{L}{b}~.
\end{align}
The total number of sites is then given by
\begin{align}
\mathcal{V}=\hat{L}^3~.
\end{align}
The notation $\mathbb{Z}_{\hat{L}}^3$ denotes the three-dimensional integer periodic lattice with the periodicity $\hat{L}$ in each direction. One could define a lattice version of the scalar quantum field theory based on the action of the curvature perturbation. From the Lagrangian density in the continuum,
\begin{align}
{\mathcal{L}_\tau } = \epsilon {a^2}\left( {\frac{1}{{c_s^2}}\zeta {'^2} - {{\left( {{\partial _i}\zeta } \right)}^2}} \right)~,
\end{align}
we could write down a discrete version of the Lagrangian
\begin{align}
{L_\tau } = \epsilon {a^2}{b^3}\sum\limits_{{\bf{x}} \in {\Omega _3}} {\left[ {\frac{1}{{c_s^2}}\zeta '{{({\bf{x}})}^2} - {{\left( {{\nabla _i}\zeta ({\bf{x}})} \right)}^2}} \right]}~,
\end{align}
where
\begin{align}
{\nabla _i}\zeta ({\bf{x}}) \equiv \frac{{\zeta \left( {{\bf{x}} + b{{\bf{r}}_i}} \right) - \zeta ({\bf{x}})}}{b}{{\bf{r}}_i}~,
\end{align}
is defined as the discrete version of the spatial derivative, and $\mathbf{r}_i$ is the unit vector along the spatial direction $i$. Note that the definition of the Lagrangian depends on the time coordinate, and here we are using the conformal time $\tau$.

Now we could perform the Legendre transform to define the Hamiltonian. The field momentum is
\begin{align}
{\pi _\zeta } = \frac{{\partial {\cal L}}}{{\partial  \zeta '}} = \frac{{2{a^2}\epsilon }}{{c_s^2}} \zeta'~.
\end{align}
Thus, the Hamiltonian density in the continuum is
\begin{align}
{\cal H}_{\tau} = {\pi _\zeta } \zeta'  - {\cal{L}_{\tau}} = \frac{{c_s^2}}{{4{a^2}\epsilon }}\pi _\zeta ^2 + \epsilon a^2{\left( {{\partial _i}\zeta } \right)^2}~.
\end{align}
Thus, the lattice version is
\begin{align}
{H_\tau } = {b^3}\sum\limits_{{\bf{x}} \in {\Omega _3}} {\left[ {\frac{{c_s^2}}{{4{a^2}\epsilon }}\pi _\zeta ^2({\bf{x}}) + \epsilon a^2{{\left( {{\nabla _i}\zeta ({\bf{x}})} \right)}^2}} \right]}~.
\end{align}
Now we could write down the rule of canonical quantization
\begin{align}
&\left[ {\zeta (\tau ,{\bf{x}}),{\pi _\zeta }(\tau ,{\bf{y}})} \right] = i{b^{ - 3}}{\delta _{{\bf{x}},{\bf{y}}}}~,\nonumber\\
&[\zeta (\tau ,{\bf{x}}),\zeta (\tau ,{\bf{y}})] = \left[ {{\pi _\zeta }(\tau ,{\bf{x}}),{\pi _\zeta }(\tau ,{\bf{y}})} \right] = 0~.
\end{align}
Now we make some brief comments on the applicability of the above theory. In our theory, we choose the ultraviolet cutoff $\Lambda=1/b$, corresponds to the spatial lattice spacing $b$. In cosmic inflation, we usually demand,
\begin{align}
H<\sqrt {H{M_{{\rm{pl}}}}\epsilon } = \sqrt {H \epsilon } \sim \sqrt{\dot{\bar{\phi}}} \lesssim \frac{1}{b}=\Lambda \ll {M_{\text{pl}}} = 1~.
\end{align}
The right-hand side is concerning that we are avoiding Planckian physics, and we set the Planck constant to be 1 in our units.

Now, we extend our mode expansions to the lattice version,
\begin{align}
\zeta (\tau ,{\bf{x}}) = \sum\limits_{\mathbf{k} \in \Gamma_3} {\frac{1}{{{L^3}}}} \left( {v({\bf{k}}){a_{\bf{k}}} + {v^*}({\bf{k}})a_{ - {\bf{k}}}^\dag } \right){e^{i{\bf{k}} \cdot {\bf{x}}}}~,
\end{align}
where $\Gamma_3$ is the dual lattice
\begin{align}
\Gamma_3=\frac{2 \pi}{L} \mathbb{Z}_{\hat{L}}^{3}~,
\end{align}
and
\begin{align}
\left[ {{a_{\bf{p}}},{a_{\bf{q}}}} \right] = 0~,~~~~~~\left[ {{a_{\bf{p}}},a_{\bf{q}}^\dag } \right] = {L^3}{\delta _{{\bf{p}},{\bf{q}}}}~.
\end{align}
Note that in the formula, we assume that $v(-\mathbf{k})=v(\mathbf{k})$ to make sure we have a real field. $v(\mathbf{k})$ should satisfy the equation of motion for the corresponding classical discrete field theory. We find
\begin{align}
&v({\bf{k}}) = \frac{H}{{2\sqrt {\epsilon {c_s}\omega {{({\bf{k}})}^3}} }}\left( {1 + i\omega ({\bf{k}}){c_s}\tau } \right){e^{ - i\omega ({\bf{k}}){c_s}\tau }}~,\nonumber\\
&\omega ({\bf{k}}) = \frac{2}{b}{\left( {\sum\limits_{i = 1}^3 {{{\sin }^2}} \left( {\frac{{b{k_i}}}{2}} \right)} \right)^{\frac{1}{2}}}~.
\end{align}
When we compare this result with the continuous mode function, it is easy to realize the following properties of the solution.
\begin{itemize}
\item We indeed have $v(-\mathbf{k})=v(\mathbf{k})$.
\item The form of the solution is intuitively understandable. The form of the dispersion relation is already presented in the case of flat space in the lattice theory \cite{Jordan:2011ne,Jordan:2011ci}. Since the metric is conformally flat, the dispersion relation should stay the same in our coordinate. Using the dispersion relation $\omega (\mathbf{k})$ in the equation of motion for the Fourier space, we will obtain $v(\mathbf{k})$.
\item In the limit where $\tau \to -\infty$, the mode function has reduced to the flat space lattice solution with the sound speed
\begin{align}
v({\bf{k}}) = \frac{{iH\tau \sqrt {{c_s}} }}{{2\sqrt {\epsilon \omega ({\bf{k}})} }}{e^{ - i\omega ({\bf{k}}){c_s}\tau }} \Rightarrow \left| {\sqrt {2\epsilon } av({\bf{k}})} \right| = \frac{{\sqrt {{c_s}} }}{{\sqrt {2\omega ({\bf{k}})} }}{e^{ - i\omega ({\bf{k}}){c_s}\tau }}~.
\end{align}
Note that $\abs{\sqrt{2\epsilon} a v(\mathbf{k})}$ the honest scalar field defined consistently for the flat space \footnote{The coefficient $\sqrt{2\epsilon}$ comes from the relation between the curvature perturbation and the scalar field perturbation.}. This is indeed the mode function of the ``phonons" in the 3+1 dimensions. Thus, we indeed define a lattice version of the Bunch-Davies vacuum.
\item We could also look at the continuum limit $b\to 0$, where the solution is reduced to the continuum answer
\begin{align}
v({\bf{k}}) \Rightarrow {v_k} = \frac{H}{{\sqrt {4\epsilon {c_{\rm{s}}}{k^3}} }}\left( {1 + ik{c_s}\tau } \right){e^{ - ik{c_s}\tau }}~.
\end{align}
Note that the existence of the cubic lattice breaks the rotational symmetry, then in the lattice, $v({\bf{k}})$ is not only a function of $k$. We could also compute the next leading short distance corrections of the dispersion relation,
\begin{align}
\omega ({\bf{k}}) = k - {b^2}\frac{{\sum\limits_i {k_i^4} }}{{24k}} + O({b^4})~,
\end{align}
which will sequentially modify, for instance, the two-point function and the power spectrum. This is, in fact, the lattice effect of the Lorentz symmetry breaking in cosmic inflation. The lattice effect of quantum gravity is studied in a completely different context. They are assuming the lattice theory should be some ultraviolet completions of general relativity and ask for phenomenological consequences on bounding the lattice spacing $b$, for instance, the observational bound on the power spectrum of CMB \cite{Jacobson:2005bg}. But now we are completely under a different situation: we are using the lattice theory to simulate the quantum field theory in the continuum.
\end{itemize}

\subsection{The gap}
When we are talking about the flat space, we know that the massless scalar field theory is gapless if we don't consider the infrared cutoff when defining the continuum field theory. In the lattice version, the gap will scale as $1/L$, which is also the infrared cutoff of the theory. Similarly, a massless field in our inflationary FRW metric is also gapless. Although the Hamiltonian is time-dependent, the Bogoliubov transformation will make the diagonal mode of the Hamiltonian to be still gapless in the whole time during inflation.

A direct way to verify the statement is through a direct computation of the Bogoliubov transformation. To start, we take the free continuum Hamiltonian expressed in $\zeta$ fields and field momenta measured in the $\tau$ coordinates. Replacing the fields and field momenta by their solutions, we obtain the Hamiltonian evaluated in the Heisenberg picture,
\begin{align}
{H_\tau } = \int {\frac{{{d^3}k}}{{{{(2\pi )}^3}}}} \left( {{{\cal A}_k}\left( {2a_{\bf{k}}^\dag {a_{\bf{k}}} + {{(2\pi )}^3}{\delta ^3}(0)} \right) + \left( {{{\cal B}_k}a_{ - {\bf{k}}}^\dag a_{\bf{k}}^\dag  + {\rm{H}}.{\rm{c.}}} \right)} \right)~,
\end{align}
where H.c. means the Hermitian conjugate, and
\begin{align}
&{{\cal A}_k} = \frac{{1 + 2c_s^2{\tau ^2}{k^2}}}{{4{c_s}{\tau ^2}k}}~,\nonumber\\
&{{\cal B}_k} = \frac{{1 - 2i{c_s}\tau k}}{{4{c_s}{\tau ^2}k}}{e^{2i{c_s}\tau k}}~.
\end{align}
The existence of the off-diagonal term appearing in the Hamiltonian reveals the time-dependent nature of the Hamiltonian. To diagonalize the Hamiltonian, we make the following Bogoliubov transformation
\begin{align}
&a_{\bf{k}}^\dag  = {\alpha _{\bf{k}}}b_{\bf{k}}^\dag  + {\beta _{\bf{k}}}{b_{ - {\bf{k}}}}~,\nonumber\\
&{a_{\bf{k}}} = \alpha _{\bf{k}}^*{b_{\bf{k}}} + \beta _{\bf{k}}^*b_{ - {\bf{k}}}^\dag~.
\end{align}
Note that now we keep $a^{\dagger}_{\mathbf{k}}$ and $a_{\mathbf{k}}$ to be static. The coefficient $\alpha$, $\beta$, and the new operators $b^{\dagger}_{\mathbf{k}}$ and $b_{\mathbf{k}}$, are allowed to be time-dependent. We wish to ensure that $b^{\dagger}_{\mathbf{k}}$ and $b_{\mathbf{k}}$ are still canonical
\begin{align}
\left[b_{\mathbf{k}}(\tau), b_{\mathbf{k}^{\prime}}(\tau)\right]=0, \quad\left[b_{\mathbf{k}}(\tau), b_{\mathbf{k}^{\prime}}^{\dagger}(\tau)\right]=(2 \pi)^{3} \delta^{3}\left(\mathbf{k}-\mathbf{k}^{\prime}\right)~.
\end{align}
Furthermore, we wish to cancel the off-diagonal terms. We find that we need to demand
\begin{align}
&{\left| {{\alpha _{\bf{k}}}} \right|^2} = \frac{1}{2}\left( {1 + \frac{{{{\cal A}_k}}}{{\sqrt {{{\left| {{{\cal A}_k}} \right|}^2} - {{\left| {{{\cal B}_k}} \right|}^2}} }}} \right)~,\nonumber\\
&{\beta _{\bf{k}}} = \frac{{\sqrt {{{\left| {{{\cal A}_k}} \right|}^2} - {{\left| {{{\cal B}_k}} \right|}^2}}  - {{\cal A}_k}}}{{{{\cal B}_k}}}\alpha _{\bf{k}}^*~.
\end{align}
Now the Hamiltonian looks like
\begin{align}
{H_\tau } = \int {\frac{{{d^3}k}}{{{{(2\pi )}^3}}}} ({c_s}k)\left( {b_{\bf{k}}^\dag {b_{\bf{k}}} + {{(2\pi )}^3}\frac{1}{2}{\delta ^3}(0)} \right)~.
\end{align}
Ignoring the time-dependence of $b_{\mathbf{k}}$, it seems that the Hamiltonian is formally the same as the flat space. This is because the nature of the above Bogoliubov transformation is to keep re-defining the vacuum state during time evolution evaluated under the definition of $b_{\mathbf{k}}$, to make sure the form of the Hamiltonian is the same as the original time $\tau_0$.

The lattice calculation is very similar to the continuum version, since we only need to replace the continuous momenta to the discrete ones. The Hamiltonian is written after the Bogoliubov transformation as
\begin{align}
{H_\tau } = \sum\limits_{{\bf{k}} \in {\Gamma _3}} {\frac{1}{{{L^3}}}} ({c_s}\omega ({\bf{k}}))\left( {b_{\bf{k}}^\dag {b_{\bf{k}}} + \frac{1}{2}{L^3}} \right)~.
\end{align}
From now, we call the particle created by $b^\dagger$ as the ``diagonal mode" (normal mode), to distinguish from the particle defined by $a^\dagger$, the $\zeta$-particle (the curvature perturbation). The above calculation in the continuum explicitly shows that the Hamiltonian does not have the mass gap at each time slice. Furthermore, from the energy, we cannot distinguish the vacuum and the diagonal modes carrying zero momenta. In the lattice calculation, the gap scales as $1/L$ since our momenta are from the discrete set $\Gamma_3$. Those two facts are consistent when we wish to assign an infrared cutoff $1/L$ to the field theory.

We wish to make the following further comments.
\begin{itemize}
\item Similar calculations about the above Bogoliubov transformation are done in the circumstance of particle production in inflation, and ``cosmic decoherence" by showing similar Bogoliubov transformations, similar to the Unruh effect and black hole information problem, could lead to fruitful particle production. Particles produced by transformations of the vacua are closely related to decoherence where the reduced density matrix becomes nearly diagonal, and quantum perturbations in inflation could become approximately ``classical" when exiting the horizon. See some references \cite{Gibbons:1977mu,Abbott:1982hn,Guth:1985ya,Albrecht:1992kf,Lesgourgues:1996jc,Kiefer:1998qe,Liu:2016aaf}.

\item We are performing the calculation in the conformal time $\tau$. The story might be different when we mix time and space and define some other time coordinates. In fact, the definition of the mass gap, which is related to energy, is naturally associated with the definition of time. In general, one might consider defining some versions of the ADM mass, although it might face some technical troubles when we are considering the de Sitter boundary condition. This question might be interesting for quantum simulation since the value of the gap is naturally associated with the efficiency of the adiabatic state preparation process for finding the vacuum. This shows further problems about the compatibility between quantum algorithms using quantum mechanics (requiring a definition of time) and relativity (admitting ambiguities of defining time and space), and the nature of quantum-extended Church-Turing Thesis in the curved spacetime. It might also be interesting to consider different definitions of the gap in AdS. We leave those discussions for future research.
\end{itemize}

\subsection{Interactions}
Now we wish to introduce interactions beyond the second-order action. Currently, in the standard theory of inflation, we only assume a scalar field $\phi$, but the nature of this field is not very clear. Furthermore, it might be possible that there are some other particles interacting with inflaton, for instance, the standard model particles. Furthermore, there might be contributions from higher-order cosmic perturbation theory, gravity non-linearities, inflaton self-interaction, and so on. Those contributions will, in principle, give contributions beyond the free theory, and one could those contributions order by order in the perturbation theory framework.

In this paper, we could imagine that we are simulating the following specific model, although most discussions in this paper could be easily generalized to much more generic scenarios. We will consider the general single-field inflation framework, and we expand the action towards the third order \cite{Chen:2006nt}. We obtain
\begin{align}
{S_\text{int}} = \int d t{d^3}x{a^3}\left\{ { - 2\frac{\lambda }{{{H^3}}}{{\dot \zeta }^3} + \frac{\Sigma }{{{a^2}{H^3}}}\left( {1 - c_s^2} \right)\dot \zeta {{({\partial _i}\zeta )}^2}} \right\}~,
\end{align}
where $\lambda$ and $\Sigma$ are model parameters\footnote{In fact, we have
\begin{align}
&\Sigma  \equiv X{P_{,X}} + 2{X^2}{P_{,XX}} = \frac{{{H^2}\epsilon }}{{c_s^2}}~,\nonumber\\
&\lambda  \equiv {X^2}{P_{,XX}} + \frac{2}{3}{X^3}{P_{,XXX}}~,
\end{align}
where $P$ is defined in the last footnote. Furthermore, the coupling constants in the cubic interaction are comparable to ${O}(\epsilon^2)$.
}. This interaction gives a general form of cubic interaction, which will generate non-trivial three-point functions, namely, non-Gaussianities.

As we discussed in the introduction, the standard inflation theory is expected to be weakly-coupled. Namely, all interactions we include here should have small coupling constants, admitting perturbative expansions. In terms of quantum simulation, during the adiabatic procedure we will discuss later, we could extend the calculation towards the strongly-coupled regime. Thus, we do not really need to expand the action if we really wish to simulate the theory of general single-field inflation: we could even study the original action without perturbative expansion in principle. However, here we are describing an example where we indeed do an expansion before encoding towards quantum devices, with the motivations described in the introduction section of this paper.

Now we write down the Hamiltonian of the above interaction. We have
\begin{align}
\mathcal{L}_{\tau}=\epsilon a^{2}\left(\frac{{\zeta}'^{2}}{c_{s}^{2}}-\left(\partial_{i} \zeta\right)^{2}\right)+a\left(-\frac{2 \lambda}{H^{3}} {\zeta'}^{3}+\frac{\Sigma}{H^{3}}\left(1-c_{s}^{2}\right) {\zeta'}\left(\partial_{i} \zeta\right)^{2}\right)~.
\end{align}
Up to the leading order perturbation theory, the Legendre transform gives the following total Hamiltonian, split as the addition of the free and the interacting pieces (from now, we ignore the subscript $\tau$ if not necessary)
\begin{align}
&H=H_{0}+H_{I}~,\nonumber\\
&H_{0}=\int d^{3} x\left(\frac{c_{s}^{2} \pi_{\zeta}^{2}}{4 a^{2} \epsilon}+a^{2} \epsilon\left(\partial_{i} \zeta\right)^{2}\right)~,\nonumber\\
&H_{{I}}=\int d^{3} x\left(\frac{c_{s}^{6} \lambda}{4 a^{5} H^{3} \epsilon^{3}} \pi_{\zeta}^{3}+\frac{c_{s}^{2}\left(c_{s}^{2}-1\right) \Sigma}{2 a H^{3} \epsilon} \pi_{\zeta}\left(\partial_{i} \zeta\right)^{2}\right)~.
\end{align}
We wish to split the whole Hamiltonian to three pieces
\begin{align}
&H=H_1+H_2+H_3~,\nonumber\\
&H_{1}=\int d^{3} x\left(\frac{c_{s}^{2} \pi_{\zeta}^{2}}{4 a^{2} \epsilon}+\frac{c_{s}^{6} \lambda}{4 a^{5} H^{3} \epsilon^{3}} \pi_{\zeta}^{3}\right)~,\nonumber\\
&H_{2}=\int d^{3} x a^{2} \epsilon\left(\partial_{i} \zeta\right)^{2}~,\nonumber\\
&H_{3}=\int d^{3} x \frac{c_{s}^{2}\left(c_{s}^{2}-1\right) \Sigma}{2 a H^{3} \epsilon} \pi_{\zeta}\left(\partial_{i} \zeta\right)^{2}~.
\end{align}
Here, $H_1$ and $H_2$ stand for the Hamiltonians represented by purely the field quanta and the field momenta, respectively, and $H_3$ is the mixing term. The lattice version is given by
\begin{align}
&H=H_1+H_2+H_3~,\nonumber\\
&H_{1} =\sum_{\textbf{x} \in \Omega_3} b^{3}\left(\frac{c_{s}^{2} \pi_{\zeta}^{2}(\textbf{x})}{4 a^{2} \epsilon}+\frac{c_{s}^{6} \lambda}{4 a^{5} H^{3} \epsilon^{3}} \pi_{\zeta}^{3}(\textbf{x})\right)~,\nonumber\\
&H_{2}=\sum_{\textbf{x} \in \Omega_3} b^{3} a^{2} \epsilon(\nabla_i \zeta(\textbf{x}))^{2} ~,\nonumber\\
&H_{3} =\sum_{\textbf{x} \in \Omega_3} b^{3} \frac{c_{s}^{2}\left(c_{s}^{2}-1\right) \Sigma}{2 a H^{3} \epsilon} \pi_{\zeta}(\textbf{x})(\nabla_i \zeta(\textbf{x}))^{2}~.
\end{align}
The Hamiltonian $H$ here will be the Hamiltonian we wish to implement in the quantum device. We have the following comments.
\begin{itemize}
\item The above splitting $H=H_1+H_2+H_3$ is important when we are doing the Trotter simulation. In the product formula, the Trotter error is estimated by chains of commutators, and the commutation relation about the field and the field momentum will be important to determine those commutators. We will discuss details about this calculation later.
\item We owe some higher-order corrections when we are doing the Legendre transformation. That is, the Legendre transformation we have done from the Lagrangian to the Hamiltonian is correct up to the leading order coupling constant. This is because the non-trivial couplings appear in the Legendre transformation. Although we have promised that we are able to simulate the theory non-perturbatively, it should be the Hamiltonian, not the theory defined by the Lagrangian.
\end{itemize}

\subsection{Further comments}
Finally, we make some further comments on some aspects of inflationary perturbation theory.
\begin{itemize}
\item The field theory content and its analog in the lattice, described in this section, could be generically applicable for other field theories which are relativistic. One might apply techniques mentioned in this paper to determine problems they are interested in for non-relativistic quantum field theories.
\item Choice of momentum/coordinates. When doing quantum simulation, in this paper, we will use the action as the integral over local fields in the spacetime. We write local fields as variables of spacetime coordinates. However, cosmological observables are usually written directly in terms of momenta instead of coordinates. Thus, an extra Fourier transform will be performed classically after we perform quantum simulations of correlation functions measured in different space coordinates. The advantage of such treatment is that we still have a local quantum field theory when we are doing quantum simulation, admit quantum advantage when we are using the product formula to do digital simulations. Of course, one could consider directly simulating the quantum field theory with coupling in the momentum space. However, in this case, we lose our local quantum field theory description. For instance, in this case, we will have a summation of modes of the form $\sigma_{\mathbf{k}} \sigma_{-\mathbf{k}}$, which is not local in the space of momenta. Furthermore, the correlation functions measured in the momentum space will carry a delta function $\delta^3 (\sum \mathbf{k})$ due to momentum conservation, which is not easy to deal with numerically. In order to read the prefactor in front of the delta function, we probably need to make an extra integral over some momenta variables to integrate the delta function out. This will bring us some extra costs numerically. In any case, to get the full correlation functions, we always need the information specified by different momenta. We leave the direct treatment in the momentum space for future works. For simulating quantum field theories written in the non-local form, see a review in quantum computational chemistry \cite{mcardle2020quantum}, and recent applications \cite{Liu:2020eoa}.
\item The Bunch-Davies vacuum. Fruitful symmetries in de Sitter space lead to a series of definitions of vacuum states for quantum field theories living in it, which are usually called the $\alpha$-states or the $\alpha$-vacua. Among those states, the Bunch-Davies vacuum plays a special role: it is the vacuum state that looks instantly the Minkovski vacuum, namely, the zero-particle state observed by a free-fall observer. Furthermore, the Bunch-Davies vacuum minimizes the initial energy in our time-dependent Hamiltonian. Thus, the Bunch-Davies vacuum becomes a natural choice when we are doing the cosmic perturbation theory. For cosmological consequences of non-Bunch-Davies vacua, see \cite{Easther:2001fi,Jiang:2016nok} for discussions. For recent discussions about simulation of Bunch-Davies vacuum, see \cite{Lewis:2019oyx}. Exploring non-Bunch-Davies vacua and looking for possible simulation opportunities in quantum technology are important research directions, and in fact, they are closely related to some fundamental assumptions in cosmic perturbation theory, see \cite{Jiang:2016nok,Burgess:2009bs}.
\item Other coordinate systems. The coordinates we study in this paper are widely used in cosmology literature, but it is only one possible slicing of the global de Sitter. With our coordinate system, it is easy to generalize the flat-space tricks since, in our coordinate, it is manifestly conformally flat. However, one could indeed consider other coordinates. Other slicings and other patches in the de Sitter space might mix the notion of space and time in the current definition. There are similar studies in AdS where people are considering the discretization of the whole spacetime and study their corresponding lattice quantum field theories (see, for instance, \cite{Brower:2019kyh}, and some previous related research \cite{Harlow:2011az,Gubser:2016guj,Heydeman:2016ldy,Bao:2017iye,Osborne:2017woa}).
\item The picture. We are quantizing the field in the Heisenberg picture, and the Hamiltonian is also written in the Heisenberg picture. In the free theory, the mode function satisfies the equation of motion. Then the quantum field satisfies the Heisenberg equation. Note that now the Hamiltonian is completely time-dependent, so the Hamiltonian in the Heisenberg picture is not the same as the Hamiltonian in the Schr\"{o}dinger picture. In our time-dependent quantum field theory setup, it is more convenient to use the Heisenberg picture, but it is natural to understand quantum computing in the Schr\"{o}dinger picture. However, it does not affect the current purpose we have. At the first step that we propose in the introduction, we will use the adiabatic state preparation. Since the time is fixed now at $\tau_0$, and we slowly turn on the coupling constant dynamically, the Hamiltonian and Schr\"{o}dinger Hamiltonians are the same. In the second step, we will use the Trotter time evolution to compute the expectation value of correlation functions evaluated at a later time. Then, we could use quantum computation to evaluate the unitary transformation in the Heisenberg picture, acting on the interacting vacuum. Hence, we are mimicking the Heisenberg evolution operator using the Schr\"{o}dinger evolution of the quantum circuit. It might be interesting to consider using the Schr\"{o}dinger picture completely in future works.
\item Trans-Planckian problem. How early should we set $\tau_0$ when we are doing the simulation? We know that the comoving scale should be exponentially small in the past since the universe is exponentially expanding. Thus, early time should correspond to extremely short distances, which might be smaller than the Planck length. This is called the Trans-Planckian problem in cosmology, since we don't have a Planckian theory of quantum gravity \cite{Martin:2000xs}. We are not addressing the conceptual resolution to the problem, but it is practically related to where we could choose $\tau_0$ when we are doing the simulation. In fact, this could be understood as an advantage of our simulation. We expect $\tau_0$ would happen at the very far past, and we could test different values of initial time $\tau_0$ other than the ideal $-\infty$ when we are doing the perturbative quantum field theory. This might be helpful for theoretical physicists to address the Trans-Planckian issue. For instance, one could check if the vacuum is really adiabatic on sub-horizon scales. In fact, this issue is closely related to the in-in formalism we will discuss later since we care about interactions in the Trans-Planckian regime happening at an extremely early time\footnote{The adiabaticity of the vacuum here is not exactly the same as the adiabatic quantum computation we will discuss later. In the later treatment, we artificially turn on the interactions, so it is a time-dependent dynamical process at the fixed initial time. But the adiabatic vacuum is made by the time-dependent quantum field theory we discuss here introduced by the scale factor.}.
\end{itemize}
\newpage

\section{The in-in formalism}\label{inin}
The above section describes how to write down the action of the inflationary perturbations. However, it does not cover how we could make predictions from those actions. The way we are doing perturbation theory and making predictions in cosmology is quite different from what we are usually doing in the flat space. In this section, we will briefly introduce the most popular framework in cosmology, the in-in formalism, and how those observables are directly connected to observations and experiments. Understanding the framework is important for us to perform quantum simulation, which we will comment briefly in this section. Furthermore, exploring the bounded error beyond the approximation of the in-in formalism provides us important motivations for quantum simulation.
\subsection{The theory}
The in-in formalism is a powerful approach for computing correlation functions for the time-dependent Hamiltonian, and it is specifically useful for cosmic perturbation theory. Unlike the S-matrix in the flat space, which starts from infinite past to infinite future (the in-out formalism), the in-in formalism is concerning the correlation functions where bras and kets are both from the past, namely, both the ``in states".

As we know, our observational data about the universe should, in principle, come from observables in quantum mechanics. Generically, we wish to compute the following expectation value in the Heisenberg picture,
\begin{align}
\left\langle\Omega_{\mathrm{in}}\left(\tau_{0}\right)\left|\mathcal{O}_{H}(\tau)\right| \Omega_{\mathrm{in}}\left(\tau_{0}\right)\right\rangle~,
\end{align}
where the operator is now evolving with time, and the state is static, equivalent to the state at the initial time $\tau_0$.

In this section, we will describe how to evaluate the above variable in the cosmic perturbation theory, how it is connected to observations, and how it is related to quantum simulation. Since our theory could generically apply to time-dependent systems, we will use the notation $\tau$ for time, and when applying the theory to cosmic inflation, we will use the in-in formalism completely in the conformal time.
\subsubsection{Basics}
We start with some kindergarten quantum mechanics. Before we start our discussion, we need to quote the following celebrated Feynman's disentangling theorem.
\begin{theorem}[Feynman's disentangling theorem]\label{feyn}
For time-dependent operators $A$ and $B$, and $\tau_1>\tau_0$, we have
\begin{align}
\mathbb{T}\exp \left[ {\int_{{\tau _0}}^{\tau_1}  {\left( {A\left( {\tau '} \right) + B\left( {\tau '} \right)} \right)} d\tau '} \right] = \mathbb{T}\exp \left[ {\int_{{\tau _0}}^{\tau_1}  {A\left( {\tau '} \right)} d\tau '} \right]\mathbb{T}\exp \left[ {\int_{{\tau _0}}^{\tau_1}  {\tilde B\left( {\tau '} \right)d\tau '} } \right]~,
\end{align}
where
\begin{align}
\tilde B(\tau) \equiv \bar{\mathbb{T}}\exp \left[ { - \int_{{\tau _0}}^\tau  A \left( {\tau '} \right)d\tau '} \right]B(\tau )\mathbb{T}\exp \left[ {\int_{{\tau _0}}^\tau  A \left( {\tau '} \right)d\tau '} \right]~,
\end{align}
and $\mathbb{T}$ is the time-ordering operator, $\bar{\mathbb{T}}$ is the anti-time-ordering operator.
\end{theorem}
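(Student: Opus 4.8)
The plan is to prove the identity by the standard ODE/uniqueness argument rather than by a term-by-term manipulation of Dyson series, since the former is much cleaner. First I would recall that each time-ordered exponential appearing in the statement is characterized as the unique solution of a linear first-order operator differential equation in its upper limit. Writing $U(\tau_1)\equiv\mathbb{T}\exp[\int_{\tau_0}^{\tau_1}(A+B)\,d\tau']$, $V(\tau_1)\equiv\mathbb{T}\exp[\int_{\tau_0}^{\tau_1}A\,d\tau']$ and $W(\tau_1)\equiv\mathbb{T}\exp[\int_{\tau_0}^{\tau_1}\tilde B\,d\tau']$, these obey $\partial_{\tau_1}U=(A(\tau_1)+B(\tau_1))\,U$, $\partial_{\tau_1}V=A(\tau_1)\,V$ and $\partial_{\tau_1}W=\tilde B(\tau_1)\,W$, each equal to the identity at $\tau_1=\tau_0$; here the convention is that later times sit to the left, so the newly produced factor always appears on the left. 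A preliminary lemma I would establish is that the anti-time-ordered exponential $\bar{\mathbb{T}}\exp[-\int_{\tau_0}^{\tau}A\,d\tau']$ equals $V(\tau)^{-1}$: differentiating the product $\bar{\mathbb{T}}\exp[-\int_{\tau_0}^{\tau}A]\,V(\tau)$ gives $-\bar{\mathbb{T}}\exp[-\int A]\,A\,V+\bar{\mathbb{T}}\exp[-\int A]\,A\,V=0$, and the product is the identity at $\tau_0$. Consequently $\tilde B(\tau)=V(\tau)^{-1}B(\tau)V(\tau)$, i.e.\ $\tilde B$ is just $B$ conjugated into the ``interaction picture'' defined by $A$.

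The core of the argument is then to differentiate the right-hand side $V(\tau_1)W(\tau_1)$ and check it satisfies the same equation as $U$. By the product rule and the equations above, $\partial_{\tau_1}(VW)=(\partial_{\tau_1}V)W+V(\partial_{\tau_1}W)=A\,VW+V\tilde B\,W$, and since $V\tilde B=VV^{-1}BV=BV$ the second term is $B\,VW$, so $\partial_{\tau_1}(VW)=(A(\tau_1)+B(\tau_1))\,VW$. Because $V(\tau_0)W(\tau_0)$ is the identity, $VW$ and $U$ solve the same initial value problem, and uniqueness of solutions of linear operator ODEs forces $U(\tau_1)=V(\tau_1)W(\tau_1)$ for every $\tau_1>\tau_0$, which is exactly the claimed disentangling identity.

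The step I expect to be the only genuine obstacle is making ``the time-ordered exponential solves this ODE'' and ``solutions are unique'' rigorous for the unbounded, time-dependent operators that actually occur in the in-in formalism: the Dyson series need not converge in operator norm, operator domains shift with $\tau$, and Picard--Lindel\"of does not apply verbatim. For the purposes of this paper the cleanest route is to treat the manipulation formally --- or, equivalently, to verify it order by order in the joint Dyson expansion of both sides, where it reduces to a bookkeeping identity about splitting nested time-ordered integrals --- exactly as is standard in the cosmology literature. A fully rigorous version would restrict to a suitable dense invariant domain, or simply work with the spatial lattice regularization of Section~\ref{latticesec}, after which all operators are bounded matrices on a finite-dimensional space and the ODE uniqueness argument above is completely rigorous.
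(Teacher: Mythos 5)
Your argument is correct, and in fact it supplies something the paper does not contain: Theorem~\ref{feyn} is only \emph{quoted} there as Feynman's known result, with no proof given, and is then used as a black box to derive the two subsequent theorems about $U_H$ and $U_I$. Your route is the standard one and all the key steps check out: the time-ordered exponentials satisfy $\partial_{\tau_1}U=(A+B)U$, $\partial_{\tau_1}V=AV$, $\partial_{\tau_1}W=\tilde B W$ with identity initial data at $\tau_0$ (new factors on the left for $\mathbb{T}$, on the right for $\bar{\mathbb{T}}$, which is exactly what makes your preliminary lemma $\bar{\mathbb{T}}\exp[-\int_{\tau_0}^{\tau}A\,d\tau']=V(\tau)^{-1}$ work), so $\tilde B=V^{-1}BV$, and then $\partial_{\tau_1}(VW)=AVW+V\tilde BW=(A+B)VW$ together with uniqueness of the linear initial value problem gives $U=VW$. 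An equivalent packaging, closer in spirit to how the paper later manipulates $U_I=U_0^\dagger U_H$, is to define $W\equiv V^{-1}U$ directly and show $\partial_{\tau_1}W=\tilde B(\tau_1)W$, but this is the same computation. Two minor remarks on your caveat paragraph: the rigor issues you flag (unbounded operators, domain questions, norm convergence of the Dyson series) are real but are also silently present in the paper's own use of the theorem, so treating the identity formally or order by order is consistent with the level of the text; and note that the spatial lattice of the paper alone does not make the operators bounded, since each site still carries a continuous field variable --- finite-dimensionality only arrives after the additional field-value truncation ($\zeta_{\max}$, $\delta\zeta$) introduced in the encoding step, at which point your ODE uniqueness argument is indeed completely rigorous.
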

According to the theorem, we will discuss some time-dependent quantum mechanics. Consider that now we are studying the time-dependent Hamiltonian $H_S(\tau)$ in the Schr\"{o}dinger picture (or we could call it as the Schr\"{o}dinger Hamiltonian). The evolution of the state $\left|\psi_{S}(\tau)\right\rangle$ is given by
\begin{align}
\ket{\psi_S (\tau) }=U_S (\tau,\tau_0) \ket{\psi_S (\tau_0)}~,
\end{align}
where $U_S$ is given by a Dyson series
\begin{align}
{U_S}(\tau,{\tau_0}) = \mathbb{T}\exp \left[ { - i\int_{{\tau_0}}^\tau H_S \left( {\tau'} \right){d}\tau'} \right]~.
\end{align}
Now, let us consider measuring the observable $\mathcal{O}$. In the Schr\"{o}dinger picture, it is given by
\begin{align}
\left\langle {{\psi _S}(\tau )|{\cal O}|{\psi _S}(\tau )} \right\rangle  = \left\langle {{\psi _S}\left( {{\tau _0}} \right)\left| {U_S^\dag \left( {\tau ,{\tau _0}} \right){\cal O}{U_S}\left( {\tau ,{\tau _0}} \right)} \right|{\psi _S}\left( {{\tau _0}} \right)} \right\rangle~.
\end{align}
Alternatively, we could define the Heisenberg picture. We define the time-dependent Heisenberg operator
\begin{align}
U_S^\dag \left( {\tau ,{\tau _0}} \right){\cal O}{U_S}\left( {\tau ,{\tau _0}} \right) = {{\cal O}_H}(\tau )~,
\end{align}
and the time-dependent result of the measurement is given by
\begin{align}
\left\langle {{\psi _S}(\tau )|{\cal O}|{\psi _S}(\tau )} \right\rangle  = \left\langle {{\psi _S}\left( {{\tau _0}} \right)\left| {{{\cal O}_H}(\tau )} \right|{\psi _S}\left( {{\tau _0}} \right)} \right\rangle~.
\end{align}
Specifically, we define the Hamiltonian in the Heisenberg picture (or we call it the Heisenberg Hamiltonian),
\begin{align}
U_S^\dag \left( {\tau ,{\tau _0}} \right){H_S}(\tau ){U_S}\left( {\tau ,{\tau _0}} \right) = {H_H}(\tau )~.
\end{align}
We could compute the conjugate of the unitary operator $U_S$:
\begin{align}
&U_S^\dag \left( {\tau ,{\tau _0}} \right) = U_S^{ - 1}\left( {\tau ,{\tau _0}} \right)\nonumber\\
&= {\left( {1 - i\int_{{\tau _0}}^\tau  d \tau 'H\left( {\tau '} \right) - \int_{{\tau _0},{\tau _1} > {\tau _2}}^\tau  d {\tau _1}d{\tau _2}H\left( {{\tau _1}} \right)H\left( {{\tau _2}} \right) +  \ldots } \right)^\dag }\nonumber\\
&= {\left( {1 + i\int_{{\tau _0}}^\tau  d t'H\left( {\tau '} \right) - \int_{{\tau _0},{\tau _1} > {\tau _2}}^\tau  d {\tau _1}d{\tau _2}H\left( {{\tau _2}} \right)H\left( {{\tau _1}} \right) +  \ldots } \right)^\dag }\nonumber\\
&= \bar{\mathbb{T}}\exp \left( {i\int_{{\tau_0}}^\tau {d\tau'H(\tau')} } \right)~.
\end{align}
Note that the Heisenberg Hamiltonian and the Schr\"{o}dinger Hamiltonian are different in time-dependent quantum mechanics. Now we could use the Feynman's disentangling theorem to show that
\begin{theorem}\label{heisenbergtime}
For $\tau_2 > \tau_1$, we have
\begin{align}
{{\cal O}_H}({\tau _2}) = U_H^\dag \left( {{\tau _2},{\tau _1}} \right){{\cal O}_H}({\tau _1}){U_H}\left( {{\tau _2},{\tau _1}} \right)~,
\end{align}
where
\begin{align}
{U_H}\left( {{\tau _2},{\tau _1}} \right) = \mathbb{T}\exp \left[ { - i\int_{{\tau _1}}^{{\tau _2}} {{H_H}} \left( {\tau '} \right)d\tau '} \right]~.
\end{align}
\end{theorem}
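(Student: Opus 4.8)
The plan is to prove the theorem in two stages. Stage one extracts an explicit closed form for the Heisenberg evolution operator $U_H(\tau_2,\tau_1)$ straight from the composition law of the Schr\"odinger propagator; this already yields the conjugation formula $\mathcal{O}_H(\tau_2)=U_H^\dagger(\tau_2,\tau_1)\mathcal{O}_H(\tau_1)U_H(\tau_2,\tau_1)$ for arbitrary $\mathcal{O}$. Stage two recasts that closed form as the time-ordered exponential of the Heisenberg Hamiltonian $H_H$ by passing it through Feynman's disentangling theorem (Theorem~\ref{feyn}), whose $\tilde B$ transform is exactly a Heisenberg-type conjugation of an operator by the evolution generated by $A$. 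The only inputs are the definitions $\mathcal{O}_H(\tau)=U_S^\dagger(\tau,\tau_0)\mathcal{O}U_S(\tau,\tau_0)$ and $H_H(\tau)=U_S^\dagger(\tau,\tau_0)H_S(\tau)U_S(\tau,\tau_0)$ recorded above, together with $U_S(\tau,\tau_0)=\mathbb{T}\exp[-i\int_{\tau_0}^\tau H_S]$.

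For stage one I would begin from $\mathcal{O}_H(\tau_2)=U_S^\dagger(\tau_2,\tau_0)\mathcal{O}U_S(\tau_2,\tau_0)$ and apply the Dyson composition law $U_S(\tau_2,\tau_0)=U_S(\tau_2,\tau_1)U_S(\tau_1,\tau_0)$, valid for $\tau_2>\tau_1>\tau_0$. Writing $\mathcal{O}=U_S(\tau_1,\tau_0)\mathcal{O}_H(\tau_1)U_S^\dagger(\tau_1,\tau_0)$ and substituting, the $\tau_0$-anchored propagators telescope and leave
\begin{align}
\mathcal{O}_H(\tau_2)=U_H^\dagger(\tau_2,\tau_1)\,\mathcal{O}_H(\tau_1)\,U_H(\tau_2,\tau_1)~,\qquad U_H(\tau_2,\tau_1)=U_S^\dagger(\tau_1,\tau_0)U_S(\tau_2,\tau_0)~.
\end{align}
Because $\mathcal{O}$ is arbitrary, this already settles the conjugation part of the statement and fixes $U_H$ up to an irrelevant phase; it also makes the boundary value $U_H(\tau_1,\tau_1)=1$, unitarity, and the group law $U_H(\tau_3,\tau_1)=U_H(\tau_3,\tau_2)U_H(\tau_2,\tau_1)$ manifest.

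For stage two I would convert the closed form into an ordered exponential whose integrand is the Heisenberg Hamiltonian. Using the group law once more, $U_H(\tau_2,\tau_1)=U_S^\dagger(\tau_1,\tau_0)\,U_S(\tau_2,\tau_1)\,U_S(\tau_1,\tau_0)$, so conjugating the time-ordered product $U_S(\tau_2,\tau_1)=\mathbb{T}\exp[-i\int_{\tau_1}^{\tau_2}H_S]$ by the fixed unitary $U_S(\tau_1,\tau_0)$ distributes over every infinitesimal factor and produces a time-ordered exponential of the conjugated Hamiltonian. I would then invoke Theorem~\ref{feyn} with $A=-iH_S$: since $\mathbb{T}\exp[\int_{\tau_0}^{\tau'}A]=U_S(\tau',\tau_0)$, the transform $\tilde B(\tau')=\bar{\mathbb{T}}\exp[-\int A]\,B(\tau')\,\mathbb{T}\exp[\int A]$ applied to $B=-iH_S$ returns precisely $-iH_H(\tau')$, the Hamiltonian conjugated to its own slice time. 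Feeding the conjugated expression through the disentangling theorem slice by slice is what is meant to trade the fixed-anchor conjugation for the slice-local Heisenberg Hamiltonian and deliver $\mathbb{T}\exp[-i\int_{\tau_1}^{\tau_2}H_H(\tau')d\tau']$.

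The main obstacle is exactly the ordering bookkeeping in stage two, and it is where Theorem~\ref{feyn} must do the real work. The naive fixed-unitary conjugation yields an integrand anchored at $\tau_1$ rather than the slice-local $H_H(\tau')$, while the $\tilde B$ transform that produces the genuine $H_H(\tau')$ relocates the conjugating propagators from the fixed anchor to the running slice time; one must check that this relocation is compatible with the time-ordering prescription $\mathbb{T}$ asserted in the statement and does not silently reverse it. I would therefore treat the careful, slice-by-slice application of the disentangling theorem, including tracking how each conjugating propagator threads past its neighbours, as the crux of the argument rather than rearranging factors by hand. As cross-checks that do not depend on resolving the ordering delicately, I would confirm that the ordered-exponential representation reproduces the boundary value $U_H(\tau_1,\tau_1)=1$, unitarity, and the composition law already obtained from the closed form in stage one, each of which pins $U_H$ down uniquely up to phase.
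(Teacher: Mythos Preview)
Your two-stage plan matches the paper's proof almost step for step. Stage one is identical: the paper also writes $\mathcal{O}_H(\tau_2)=U_S^\dagger(\tau_2,\tau_0)U_S(\tau_1,\tau_0)\,\mathcal{O}_H(\tau_1)\,U_S^\dagger(\tau_1,\tau_0)U_S(\tau_2,\tau_0)$ and reduces the problem to identifying $U_S^\dagger(\tau_1,\tau_0)U_S(\tau_2,\tau_0)$ with $U_H(\tau_2,\tau_1)$.

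The difference is in how stage two is executed. The paper does \emph{not} apply the disentangling theorem ``slice by slice'' or take $A=B=-iH_S$ (which, as you may have noticed, would make $A+B=-2iH_S$ and generate the wrong evolution). Instead it makes a single clean application of Theorem~\ref{feyn} on the full interval $[\tau_0,\tau_2]$ with the indicator-function split
\begin{align}
A(\tau)=\texttt{Boole}(\tau_0<\tau<\tau_1)\,(-iH_S(\tau))~,\qquad B(\tau)=\texttt{Boole}(\tau_1<\tau<\tau_2)\,(-iH_S(\tau))~,
\end{align}
so that $A+B=-iH_S$ everywhere, the first factor on the right of Theorem~\ref{feyn} collapses to $U_S(\tau_1,\tau_0)$, and the second factor is the desired ordered exponential over $[\tau_1,\tau_2]$. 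This Boole-function trick is the concrete device that replaces your vaguer ``slice by slice'' prescription, and it is worth adopting: it turns your stage two into a one-line application of the disentangling theorem rather than an iterated one. You were right to flag the anchoring issue---with this choice the conjugating propagator in $\tilde B$ freezes at $U_S(\tau_1,\tau_0)$ once $\tau'>\tau_1$, which is exactly the point the paper glosses over with ``then we get the answer''; your cross-checks (boundary value, unitarity, composition law) remain useful sanity tests.
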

\begin{proof}
Consider an operator $\mathcal{O}_S$ in the Schr\"{o}dinger picture. By definition, we have
\begin{align}
{{\cal O}_S} = {U_S}\left( {{\tau _1},{\tau _0}} \right){{\cal O}_H}({\tau _1})U_S^\dag \left( {{\tau _1},{\tau _0}} \right) = {U_S}\left( {{\tau _2},{\tau _0}} \right){{\cal O}_H}({\tau _2})U_S^\dag \left( {{\tau _2},{\tau _0}} \right)~.
\end{align}
Then
\begin{align}
{{\cal O}_H}({\tau _2}) = U_S^\dag \left( {{\tau _2},{\tau _0}} \right){U_S}\left( {{\tau _1},{\tau _0}} \right){{\cal O}_H}({\tau _1})U_S^\dag \left( {{\tau _1},{\tau _0}} \right){U_S}\left( {{\tau _2},{\tau _0}} \right)~.
\end{align}
Namely we want to show
\begin{align}
U_S^\dag \left( {{\tau _1},{\tau _0}} \right){U_S}\left( {{\tau _2},{\tau _0}} \right) = {U_H}\left( {{\tau _2},{\tau _1}} \right)~.
\end{align}
Treat $\tau_1$ as $\tau_2$ in the statement of Theorem \ref{feyn}. Then take
\begin{align}
&A(\tau ) = \texttt{Boole}({\tau _0} < \tau  < {\tau _1})\left( { - {H_S}(\tau )} \right)~,\nonumber\\
&B(\tau ) = \texttt{Boole}({\tau _1} < \tau  < {\tau _2})\left( { - {H_S}(\tau )} \right)~,
\end{align}
where
\begin{align}
\texttt{Boole}(x) = \left\{ {\begin{array}{*{20}{c}}
{x = \texttt{yes}:}&1\\
{x = \texttt{no}:}&0
\end{array}} \right.~.
\end{align}
Then we get the answer.
\end{proof}
The Heisenberg picture will be the starting point in the following discussion. So when we are talking about the Hamiltonian $H$, we mean the Heisenberg Hamiltonian in this paper.

We consider a time-dependent Hamiltonian in the perturbation theory. We define the following splitting in the Heisenberg picture
\begin{align}
H(\tau)= {H_0}(\tau) + {H_I}(\tau)~,
\end{align}
where $H_0$ here is the free Hamiltonian, and $H_I$ is the small interaction term. Now we define the interaction picture. The states in the interaction picture are defined as
\begin{align}
&\left| {{\psi _I}(\tau )} \right\rangle  = \bar{\mathbb{T}}\exp \left( {i\int_{{\tau _0}}^\tau  d \tau '{H_0}\left( {\tau '} \right)} \right)\left| {{\psi _S}(\tau )} \right\rangle \nonumber\\
&= \bar{\mathbb{T}}\exp \left( {i\int_{{\tau_0}}^\tau d \tau '{H_0}\left( {\tau '} \right)} \right)\mathbb{T}
\exp \left( { - i\int_{{\tau _0}}^\tau  H \left( {\tau '} \right)d\tau '} \right)\left| {{\psi _S}\left( {{\tau _0}} \right)} \right\rangle~.
\end{align}
From this formula, we obtain the evolution operator in the interaction picture,
\begin{align}
{U_I}\left( {\tau ,{\tau _0}} \right) = \mathbb{T}\exp \left( { - i\int_{{\tau _0}}^\tau  d \tau '{{\tilde H}_I}\left( {\tau '} \right)} \right)~,
\end{align}
where $\tilde{H}_I$ is the \emph{interacting Hamiltonian} $H_I$ \emph{in the interaction picture}. Defining
\begin{align}
{U_0}\left( {\tau ,{\tau _0}} \right) =\mathbb{T}\exp \left( { - i\int_{{\tau _0}}^\tau  d \tau '{H_0}\left( {\tau '} \right)} \right)~,
\end{align}
we define $\tilde{H}_I$ as
\begin{align}
{\tilde H_I}(\tau ) = U_0^\dag (\tau ,{\tau _0}){H_I}(\tau ){U_0}(\tau ,{\tau _0})~.
\end{align}
Now, we could use the Feynman's disentangling theorem to show that
\begin{theorem}
\begin{align}
{U_I}\left( {\tau,{\tau_0}} \right) = U_0^\dag \left( {\tau,{\tau_0}} \right)U_H \left( {\tau,{\tau_0}} \right)~.
\end{align}
\end{theorem}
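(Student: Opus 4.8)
The plan is to read the claim off Feynman's disentangling theorem (Theorem~\ref{feyn}) after the correct identification of the operators $A$ and $B$. I would apply Theorem~\ref{feyn} on the interval $[\tau_0,\tau]$ (so that $\tau_1\to\tau$) with
\begin{align}
A(\tau')=-i\,H_0(\tau')~,\qquad B(\tau')=-i\,H_I(\tau')~,
\end{align}
where $H_0$ and $H_I$ are the free and interacting pieces of the Heisenberg-picture splitting $H=H_0+H_I$. With this choice $A+B=-iH$, so the left-hand side of the disentangling identity is $\mathbb{T}\exp[-i\int_{\tau_0}^\tau H(\tau')d\tau']$, which is precisely $U_H(\tau,\tau_0)$ by Theorem~\ref{heisenbergtime} (taking $\tau_1=\tau_0$, $\tau_2=\tau$). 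The first factor on the right-hand side is $\mathbb{T}\exp[-i\int_{\tau_0}^\tau H_0(\tau')d\tau']=U_0(\tau,\tau_0)$ by definition.

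The only genuine computation is to verify that the dressed operator $\tilde B$ produced by the theorem coincides with $-i\tilde H_I$. Running the same Dyson-series manipulation used earlier in the text for $U_S^\dagger$, but with $H_0$ in place of $H$, gives $\bar{\mathbb{T}}\exp[i\int_{\tau_0}^\tau H_0(\tau')d\tau']=U_0^\dagger(\tau,\tau_0)$. Hence
\begin{align}
\tilde B(\tau)&=\bar{\mathbb{T}}\exp\left[i\int_{\tau_0}^\tau H_0(\tau')d\tau'\right](-iH_I(\tau))\,\mathbb{T}\exp\left[-i\int_{\tau_0}^\tau H_0(\tau')d\tau'\right]\nonumber\\
&=-i\,U_0^\dagger(\tau,\tau_0)H_I(\tau)U_0(\tau,\tau_0)=-i\,\tilde H_I(\tau)~,
\end{align}
which is exactly the definition of $\tilde H_I$ given above. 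Therefore the second factor on the right-hand side of the disentangling identity is $\mathbb{T}\exp[-i\int_{\tau_0}^\tau \tilde H_I(\tau')d\tau']=U_I(\tau,\tau_0)$, and Theorem~\ref{feyn} yields $U_H(\tau,\tau_0)=U_0(\tau,\tau_0)\,U_I(\tau,\tau_0)$.

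To finish, I would note that $U_0(\tau,\tau_0)$ is unitary — it is the time-ordered exponential of $-iH_0$ with $H_0$ Hermitian — so $U_0^{-1}=U_0^\dagger$, and the relation above rearranges to $U_I(\tau,\tau_0)=U_0^\dagger(\tau,\tau_0)U_H(\tau,\tau_0)$, as claimed. I do not expect a real obstacle: all of the analytic content lives in Theorem~\ref{feyn}, and what remains is bookkeeping. The two points to watch are that the conjugating exponentials defining $\tilde B$ are anchored at $\tau_0$ (which is precisely how $\tilde H_I$ was defined, so the base points agree) and that the anti-time-ordered exponential is correctly identified with $U_0^\dagger$ rather than an operator based at a shifted time; the fact that $H_0$ and $H_I$ are themselves nontrivially time-dependent Heisenberg operators is harmless because Theorem~\ref{feyn} holds for arbitrary time-dependent $A$ and $B$.
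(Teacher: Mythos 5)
Your proof is correct and follows essentially the same route as the paper: both apply Feynman's disentangling theorem (Theorem~\ref{feyn}) on $[\tau_0,\tau]$ with the identification $A\propto -H_0$, $B\propto -H_I$, so that the left side is $U_H$, the first factor is $U_0$, and the dressed operator $\tilde B$ reproduces $\tilde H_I$, giving $U_H=U_0U_I$ and hence $U_I=U_0^\dagger U_H$. Your version simply spells out the bookkeeping (the $-i$ factors, the check that $\tilde B=-i\tilde H_I$ with base point $\tau_0$, and the unitarity of $U_0$) that the paper leaves implicit.
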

\begin{proof}
Understand $\tau_2$ as $\tau$ in the statement of Theorem \ref{feyn}. Call
\begin{align}
A =  - {H_0}~,~~~~~~B =  - {H_I}~.
\end{align}
Then we get the answer.
\end{proof}
Thus, one could check the expectation value of operators by defining operators in the interaction picture $\tilde{\mathcal{O}}$,
\begin{align}
\tilde {\cal O}(\tau) = U_0^\dag \left( {\tau,{\tau_0}} \right){\cal O}{U_0}\left( {\tau,{\tau_0}} \right)~,
\end{align}
so we have
\begin{align}
\left\langle {{\psi _I}(\tau)} \right|\tilde{\mathcal{O}}(\tau)\left| {{\psi _I}(\tau)} \right\rangle  = \left\langle {{\psi _S}(\tau)} \right| \mathcal{O} \left| {{\psi _S}(\tau)} \right\rangle~.
\end{align}
Now, say that we are interested in computing the expectation value of the Heisenberg operator $\mathcal{O}_H(\tau)$ in the vacuum defined in the interaction theory in the far past $\tau_0$,
\begin{align}
\left\langle {{\Omega _{{\rm{in}}}}({\tau_0})} \right|{{\cal O}_H}(\tau)\left| {{\Omega _{{\rm{in}}}}({\tau_0})} \right\rangle~.
\end{align}
Here, notice that we define the vacuum by the ground state of the Heisenberg Hamiltonian $H(\tau_0)$, which is equal to the Schr\"{o}dinger Hamiltonian $H_S(\tau_0)$. We have
\begin{align}
\left\langle {{\Omega _{{\rm{in}}}}({\tau_0})} \right|{{\cal O}_H}(\tau)\left| {{\Omega _{{\rm{in}}}}({\tau_0})} \right\rangle  = \left\langle {{\Omega _{{\rm{in}}}}({\tau_0})} \right|U_I^\dag \left( {\tau,{\tau_0}} \right)\tilde {\cal O}{U_I}\left( {\tau,{\tau_0}} \right)\left| {{\Omega _{{\rm{in}}}}({\tau_0})} \right\rangle~.
\end{align}

\subsubsection{The $i\epsilon$-prescription and the in-in formalism}
The evolution operator in the interaction picture, $U_I$, could also help us prepare the interacting vacuum from the free theory vacuum in the flat space quantum field theory. This will happen in not only the time-independent Hamiltonian, but also the time-dependent Hamiltonian. We will discuss this point as follows.

Suppose the Hamiltonians $H$, $H_0$ and $H_I$ are all time-independent, we have
\begin{align}
&\exp ( - iH(\tau - {\tau_0}))\left| {{\Omega _{{\rm{free}}}}} \right\rangle = \sum\limits_n {\exp ( - i{E_n}(\tau - {\tau_0}))\left| n \right\rangle \left\langle n | {{\Omega _{{\rm{free}}}}} \right\rangle } \nonumber\\
&= \exp ( - i{E_0}(\tau - {\tau_0}))\left| {{\Omega _{{\rm{in}}}}} \right\rangle \left\langle {{\Omega _{{\rm{in}}}}} |{{\Omega _{{\rm{free}}}}} \right\rangle+ \sum\limits_{n \ne 0} {\exp ( - i{E_n}(\tau - {\tau_0}))\left| n \right\rangle \left\langle n | {{\Omega _{{\rm{free}}}}} \right\rangle }~.
\end{align}
Now, we write $\ket{\Omega_{\text{free}}}$ as the free theory vacuum, the ground state of $H_0$, and $\ket{\Omega_{\text{in}}}$ as the interaction vacuum, the ground state of $H$. Here $\ket{n}$ is used to denote all eigenstates of $H$, and $n=0$ is the ground state. We introduce the following prescription\footnote{Note that this $i\epsilon$-prescription now is the standard $i\epsilon$-prescription in quantum field theory, and this $\epsilon$ is not the slow-roll parameter in cosmic inflation.},
\begin{align}
\tau \to \tilde{\tau}= \tau (1-i\epsilon)~,
\end{align}
where $\epsilon$ is a small positive number. Now we have
\begin{align}
&\sum\limits_n {\exp ( - i{E_n}(\tau - {\tau_0}))\left| n \right\rangle \left\langle {n|{\Omega _{{\rm{free}}}}} \right\rangle } \nonumber\\
&\to \sum\limits_n {\exp ( - i{E_n}(\tilde{\tau} - {{\tilde {\tau}}_0}))\left| n \right\rangle \left\langle {n|{\Omega _{{\rm{free}}}}} \right\rangle } \nonumber\\
&= \sum\limits_n {\exp (\epsilon {E_n}{\tau_0})\exp (i{E_n}{\tau_0} - i{E_n}\tilde{\tau})\left| n \right\rangle \left\langle {n|{\Omega _{{\rm{free}}}}} \right\rangle }~.
\end{align}
Now we take the limit where $\tau_0 \to -\infty$. In this case, the summation over states is dominated by the first term, namely the ground state contribution,
\begin{align}
\exp ( - iH(\tilde{{\tau}} - {\tilde{{\tau}}_0}))\left| {{\Omega _{{\rm{free}}}}} \right\rangle  \approx \exp ( - i{E_0}(\tilde{\tau} - {\tilde{\tau}_0}))\left| {{\Omega _{{\rm{in}}}}} \right\rangle \left\langle {{\Omega _{{\rm{in}}}}} |{{\Omega _{{\rm{free}}}}} \right\rangle~.
\end{align}
From now on, we will not distinguish the difference between $\tau$ and $\tilde{\tau}$ in notation. So we get the formula for the interaction vacuum
\begin{align}
\exp ( - iH(\tau - {\tau_0}))\left| {{\Omega _{{\rm{in}}}}} \right\rangle  \approx \frac{{\exp ( - iH(\tau - {\tau_0}))\left| {{\Omega _{{\rm{free}}}}} \right\rangle }}{{\left\langle {{\Omega _{{\rm{in}}}}|{\Omega _{{\rm{free}}}}} \right\rangle }}~.
\end{align}
Here, $\ket{\Omega_\text{in}}$ is normalized but $\ket{\Omega_\text{free}}$ is not. In the time-dependent case, it is usually claimed that similar things should happen in the time-dependent Hamiltonian,
\begin{align}
\mathbb{T}\exp \left( { - i\int_{{\tau _0}}^\tau  {{{\tilde H}_I}} \left( {\tau '} \right)d\tau '} \right)\left| {{\Omega _{{\rm{in}}}}({\tau _0})} \right\rangle  \approx \frac{{\mathbb{T}\exp \left( { - i\int_{{\tau _0}}^\tau  {{{\tilde H}_I}} \left( {\tau '} \right)d\tau '} \right)\left| {{\Omega _{{\rm{free}}}}({\tau _0})} \right\rangle }}{{\left\langle {{\Omega _{{\rm{in}}}}({\tau _0})|{\Omega _{{\rm{free}}}}({\tau _0})} \right\rangle }}~,
\end{align}
and the same $i\epsilon$-prescription for time coordinates are used. One could argue that this will approximately happen if there is no pole for the Hamiltonian around time $\tau_0$, so the contribution of the whole time-ordered exponential is dominated roughly by $H_0$ for a sufficiently long period of time. Thus we have
\begin{align}
&\left\langle {{\Omega _{{\rm{in}}}}\left( {{\tau_0}} \right)\left| {{{\cal O}_H}(\tau)} \right|{\Omega _{{\rm{in}}}}\left( {{\tau_0}} \right)} \right\rangle \nonumber\\
&= \left\langle {{\Omega _{{\rm{in}}}}\left( {{\tau_0}} \right)\left| {U_I^\dag \left( {\tau,\tau_0} \right)\tilde {\cal O}{U_I}\left( {\tau,\tau_0} \right)} \right|{\Omega _{{\rm{in}}}}\left( {{\tau_0}} \right)} \right\rangle \nonumber\\
&= \left\langle {{\Omega _{{\rm{in}}}}\left( {{\tau_0}} \right)\left| {U_H^\dag \left( {\tau,\tau_0} \right){U_0}\left( {\tau,\tau_0} \right)\tilde {\cal O}U_0^\dag \left( {\tau,\tau_0} \right){U_H}\left( {\tau,\tau_0} \right)} \right|{\Omega _{{\rm{in}}}}\left( {{\tau_0}} \right)} \right\rangle \nonumber\\
&\approx \frac{{\left\langle {{\Omega _{{\rm{free}}}}\left( {{\tau_0}} \right)} \right|U_H^\dag \left( {\tau,\tau_0} \right){U_0}\left( {\tau,\tau_0} \right)\tilde {\cal O}U_0^\dag \left( {\tau,\tau_0} \right){U_H}\left( {\tau,\tau_0} \right)\left| {{\Omega _{{\rm{free}}}}\left( {{\tau_0}} \right)} \right\rangle }}{{{{\left| {\left\langle {{\Omega _{{\rm{in}}}}\left( {{\tau_0}} \right)|{\Omega _{{\rm{free}}}}\left( {{\tau_0}} \right)} \right\rangle } \right|}^2}}}\nonumber\\
&= \frac{{\left\langle {{\Omega _{{\rm{free}}}}\left( {{\tau_0}} \right)} \right|U_I^\dag \left( {\tau,\tau_0} \right)\tilde {\cal O}{U_I}\left( {\tau,\tau_0} \right)\left| {{\Omega _{{\rm{free}}}}\left( {{\tau_0}} \right)} \right\rangle }}{{{{\left| {\left\langle {{\Omega _{{\rm{in}}}}\left( {{\tau_0}} \right)|{\Omega _{{\rm{free}}}}\left( {{\tau_0}} \right)} \right\rangle } \right|}^2}}}~.
\end{align}
Note that since $\ket{\Omega_\text{in}(\tau_0)}$ is normalized, we could put an identity operator $\mathcal{O}=1$. If we believe that $U_H$, $U_I$ and $U_0$ are still unitaries (although now we use the $i\epsilon$-prescription), operator $\mathcal{O}$ in other pictures, $\mathcal{O}_H$ and $\tilde{\mathcal{O}}$ are also identities. Now we get
\begin{align}
{\left| {\left\langle {{\Omega _{{\rm{in}}}}\left( {{\tau_0}} \right)|{\Omega _{{\rm{free}}}}\left( {{\tau_0}} \right)} \right\rangle } \right|^2} = 1~.
\end{align}
Note that this does not mean that those two states are equal, since $\ket{\Omega_\text{free}(\tau_0)}$ is not normalized.

Finally, we arrive at the main formula used in various literature for computing inflationary perturbations,
\begin{align}
\left\langle {{\Omega _{{\rm{in}}}}\left( {{\tau_0}} \right)\left| {{{\cal O}_H}(\tau)} \right|{\Omega _{{\rm{in}}}}\left( {{\tau_0}} \right)} \right\rangle  = \left\langle {{\Omega _{{\rm{free}}}}\left( {{\tau_0}} \right)} \right|U_I^\dag \left( {\tau,\tau_0} \right)\tilde {\cal O}{U_I}\left( {\tau,\tau_0} \right)\left| {{\Omega _{{\rm{free}}}}\left( {{\tau_0}} \right)} \right\rangle~.
\end{align}
The full methodology is called the in-in formalism in the literature of cosmology. Now we comment on this formalism briefly,
\begin{itemize}
\item The approximation
\begin{align}
\mathbb{T}\exp \left( { - i\int_{{\tau _0}}^\tau  {{{\tilde H}_I}} \left( {\tau '} \right)d\tau '} \right)\left| {{\Omega _{{\rm{in}}}}({\tau _0})} \right\rangle  \approx \frac{{\mathbb{T}\exp \left( { - i\int_{{\tau _0}}^\tau  {{{\tilde H}_I}} \left( {\tau '} \right)d\tau '} \right)\left| {{\Omega _{{\rm{free}}}}({\tau _0})} \right\rangle }}{{\left\langle {{\Omega _{{\rm{in}}}}({\tau _0})|{\Omega _{{\rm{free}}}}({\tau _0})} \right\rangle }}~,
\end{align}
is promisingly reasonable since the integral is dominated by the early time $\tau_0$ because of the inflationary universe. However, this assumption is not as solid as the perturbative quantum field theory in the flat space, where we have a rigorous statement mathematically, which is called the Gell-Mann and Low theorem \cite{GellMann:1951rw}. Further issues include the statement where ${\left| {\left\langle {{\Omega _{{\rm{in}}}}\left( {{\tau_0}} \right)|{\Omega _{{\rm{free}}}}\left( {{\tau_0}} \right)} \right\rangle } \right|^2} = 1 $, which requires unitarity but in fact it is not, due to the Wick-rotated time. We believe that the current treatment in the $i\epsilon$-prescription of cosmology is physically correct, but can we rigorously prove it? Furthermore, can we bound the error from the above approximation? They are still open problems. Some related insightful discussions include \cite{Jiang:2016nok,Burgess:2009bs}.
\item However, our quantum simulation algorithm completely does not rely on the in-in formalism and the perturbative expansion: it is directly operated in the Heisenberg picture, and the calculation could be performed beyond the perturbative regime. As long as we know the interacting vacuum $\ket{\Omega_{\text{in}}(\tau_0)}$ in the quantum circuit, we could directly evaluate the following expression based on Theorem \ref{heisenbergtime}
\begin{align}
\left\langle {{\Omega _{{\rm{in}}}}\left( {{\tau _0}} \right)\left| {{{\cal O}_H}(\tau )} \right|{\Omega _{{\rm{in}}}}\left( {{\tau _0}} \right)} \right\rangle  = \left\langle {{\Omega _{{\rm{in}}}}\left( {{\tau _0}} \right)\left| {U_H^\dag \left( {\tau ,{\tau _0}} \right){{\cal O}_H}\left( {{\tau _0}} \right){U_H}\left( {\tau ,{\tau _0}} \right)} \right|{\Omega _{{\rm{in}}}}\left( {{\tau _0}} \right)} \right\rangle~,
\end{align}
by constructing the unitary evolution
\begin{align}
{U_H}\left( {\tau ,{\tau _0}} \right) = \mathbb{T}\exp \left[ { - i\int_{{\tau _0}}^\tau  {{H_H}} \left( {\tau '} \right)d\tau '} \right]~,
\end{align}
and measure the expectation value using, for instance, post-selection. Thus, we claim that our quantum simulation program could justify the correctness and bound the error of the in-in formalism, hence solve those open problems. Furthermore, since we could extend the algorithm in other cosmic phases (for instance, bouncing universe), it could help us compute correlation functions numerically for geometries beyond inflation where there is no initial manifest dominance in the integral.

\end{itemize}

\subsection{Experimental observables}
Now we give a brief discussion on how the above observables are connected to experiments. Usually, in the experiments, we study correlation functions in momentum space. For a given operator $\mathcal{O}(t,\bf{x})$, the Fourier transform is defined as
\begin{align}
&{\mathcal{O}_{\bf{k}}}(\tau ) = \int {{d^3}} x\mathcal{O}(\tau, {\bf{x}}){e^{ - i{\bf{k}} \cdot {\bf{x}}}}~,\nonumber\\
&\mathcal{O}(\tau, {\bf{x}}) = \int {\frac{{{d^3}k}}{{{{(2\pi )}^3}}}} {\mathcal{O}_{\bf{k}}}(\tau ){e^{i{\bf{k}} \cdot {\bf{x}}}}~.
\end{align}
For the two-point function, in the case of cosmology, we know that it is invariant under translation in the following sense,
\begin{align}
\left\langle {\mathcal{O}(\tau,{\bf{x}})\mathcal{O}(\tau,{\bf{y}})} \right\rangle  \equiv f^{\mathcal{O}}({\bf{x}} - {\bf{y}})~.
\end{align}
Then the two-point function looks like
\begin{align}
&\left\langle {{\mathcal{O}_{{{\bf{k}}_1}}}{\mathcal{O}_{{{\bf{k}}_2}}}} \right\rangle  = \int {{d^3}} x\int {{d^3}} yf^{\mathcal{O}}({\bf{x}} - {\bf{y}}){e^{ - i{{\bf{k}}_1} \cdot {\bf{x}} - i{{\bf{k}}_2} \cdot {\bf{y}}}}\nonumber\\
&= {(2\pi )^3}\delta^3 ({{\bf{k}}_1} + {{\bf{k}}_2})\int {{d^3}} uf^{\mathcal{O}}({\bf{u}}){e^{ - i{{\bf{k}}_1} \cdot {\bf{u}}}}~.
\end{align}
We could define the power spectrum $P_{\mathcal{O}}$
\begin{align}
&\left\langle {{\mathcal{O}_{{{\bf{k}}_1}}}{\mathcal{O}_{{{\bf{k}}_2}}}} \right\rangle  = {(2\pi )^3}{\delta ^3}\left( {{{\bf{k}}_1} + {{\bf{k}}_2}} \right)\frac{{2{\pi ^2}}}{{k_1^3}}{P_{\mathcal{O}}}({\mathbf{k}_1})~,\nonumber\\
&{P_{\mathcal{O}}}({\bf{k}}) = \frac{{{k^3}}}{{2{\pi ^2}}}\int {{d^3}} u{f^{\mathcal{O}}}({\bf{u}}){e^{ - i{\bf{k}} \cdot {\bf{u}}}}~.
\end{align}
For curvature perturbation $\mathcal{O}=\zeta$, we could firstly compute the correlation function in the coordinate space to obtain the function $f^{\zeta}$ using quantum simulation, then we could perform the Fourier transform above to compute the power spectrum. Observationally, the power spectrum $P_\zeta$ could be determined by the CMB map or the LSS, specifically at the time of ``horizon exit" where $k=aH$. At this time, quantum perturbations decohere and become statistical perturbations. Now the function $P_{\zeta}(\mathbf{k})=P_{\zeta}(k)$ becomes purely a function of the momentum norm (the wavenumber). To make further contact with experimental observables, we could define transfer functions that could keep track of cosmic evolution after the horizon exit. In the CMB, the angular power spectrum of CMB temperature fluctuations $C_\ell$ could be written as an integral of the primordial power spectrum $P_\zeta$. In the LSS, the late-time power spectrum of dark matter density fluctuations is proportional to the $P_\zeta$ with certain transfer functions. For further knowledge, see the lecture notes \cite{Baumann:2009ds} and references therein.

Similarly, one could consider three-point functions. The translational invariance of the three-point function in the coordinate space ensures that in the momentum space, there is a factor given by the delta function. Moreover, we could define the bispectrum $\mathcal{F}$ of the curvature perturbation $\zeta$
\begin{align}
\left\langle\zeta_{\mathbf{k}_{1}} \zeta_{\mathbf{k}_{2}} \zeta_{\mathbf{k}_{3}}\right\rangle=(2 \pi)^{7} \delta^{3}\left(\mathbf{k}_{1}+\mathbf{k}_{2}+\mathbf{k}_{3}\right) \frac{P_{\zeta}^{2}}{k_{1}^{2} k_{2}^{2} k_{3}^{2}} \mathcal{F}\left(k_{1} / k_{3}, k_{2} / k_{3}\right)~,
\end{align}
which is directly related to non-Gaussianities of the corresponding inflationary model. Non-Gaussianities provide important primordial information of particles and interactions in the early universe, and they are directly related to observations.

\subsection{Examples}
Inherited from our previous discussions, we present the results directly without specifying the process, referring to previous results in \cite{Chen:2006nt}. In general single-field inflationary models we discussed before, the leading-order power spectrum is computed directly from the free theory,
\begin{align}
{P_\zeta } = \frac{{{H^2}}}{{8{\pi ^2}{c_s}\epsilon }}~.
\end{align}
The leading-order bispectrum is given by the tree-level diagram using the in-in formalism.
\begin{align}
&{\cal F} = \left( {\frac{1}{{c_s^2}} - 1 - \frac{{2\lambda }}{\Sigma }} \right)\frac{{3{k_1}{k_2}{k_3}}}{{2{{\left( {{k_1} + {k_2} + {k_3}} \right)}^3}}}+ \left( {\frac{1}{{c_s^2}} - 1} \right)\times \nonumber\\
&\left[ { - \frac{{k_1^2k_2^2 + k_1^2k_3^2 + k_2^2k_3^2}}{{{k_1}{k_2}{k_3}\left( {{k_1} + {k_2} + {k_3}} \right)}} + \frac{{k_1^2k_2^3 + k_1^2k_3^3 + k_2^2k_3^3 + k_2^2k_1^3 + k_3^2k_1^3 + k_3^2k_2^3}}{{2{k_1}}} + \frac{{k_1^3 + k_2^3 + k_3^3}}{{8{k_1}{k_2}{k_3}}}} \right]~,
\end{align}
with the parameter $\lambda$ and $\Sigma$ defined above. Note that the result is perturbative in the slow-roll parameter.

\subsection{Further comments}
At the current stage, we wish to make some further comments.
\begin{itemize}
\item As we mentioned before, note that, in the above sections, we mostly describe the perturbative quantum field theory formalism theoretically. Progress and problems we have mentioned before about this theoretical prescription provide us motivations to perform quantum simulation in the future quantum device and benchmark the quantum simulation algorithms and devices using the known answer from quantum field theory calculations. However, when we are doing quantum simulation, we are completely not using the perturbative method in the interaction picture. In fact, we don't even need to introduce the interaction picture, and we could just focus on the Heisenberg picture when we are doing quantum simulation.
\item Furthermore, we wish to mention that when doing quantum simulation, in this paper, we are quoting the perturbative action we will use as examples. However, we could even imagine that we could just simulate the original action beyond the slow-roll expansion. Extra treatments are needed in this process to separate the classical and the quantum part of the action, and furthermore, implement them in the quantum device. It could potentially be an interesting generalization of our current work.
\item As we have described, the Hamiltonian we are considering is intrinsically time-dependent even for the Heisenberg picture since the spacetime is exponentially expanding with time. Thus, it might be interesting to apply methods and tricks from the study of quantum open-system and quantum thermodynamics in quantum information science, for instance, the Lindblad equation and the quantum resource theory. Moreover, the precise formulation of open systems in quantum field theory needs to be formalized \cite{JPwork}. Moreover, it might be interesting to make use of Floquet dynamics in quantum many-body physics to study conceptual problems in cosmology and analog quantum simulation, since the time-dependent dynamics in Floquet systems, as periodically driven open systems, might be similar to the cosmic evolution of cyclic or bouncing cosmologies (see a related work \cite{Easson:2018qgr}). Furthermore, for cyclic or bouncing theories, they might be more suitable for quantum algorithms, since they are strongly-coupled processes, which are relatively harder to understand compared to the perturbative dynamics.
\end{itemize}

\newpage
\section{Designing the algorithm}\label{algorithm}
\subsection{The original Jordan-Lee-Preskill algorithm}
The Jordan-Lee-Preskill algorithm \cite{Jordan:2011ne,Jordan:2011ci} could be regarded as a generic paradigm for simulating quantum field theories in the quantum computer. Here we briefly review the original version of the Jordan-Lee-Preskill algorithm, and our generalization will be discussed later on.

The original algorithm is designed for simulating the $\lambda \phi^4$ scalar quantum field theory in general spacetime dimensions, where the lattice version of Hamiltonian is
\begin{align}
{H_t } = {b^3}\sum\limits_{{\bf{x}} \in {\Omega _3}} {\left[ {\frac{1}{2}\pi _\phi ^2({\bf{x}}) + \fft{1}{2} {{\left( {{\nabla _i}\phi ({\bf{x}})} \right)}^2}}+\fft{1}{2}m_0^2 \phi ({\bf{x}})^2+\fft{\lambda_0}{4!}\phi ({\bf{x}})^4 \right]}~,
\end{align}
where the coupling $\lambda_0$ can either be weak or strong. Although the $\lambda\phi^4$ scalar quantum field theory now is defined in a lattice, the value of the scalar field $\phi({\bf x})$ at each site is continuous and generally unbounded, while the digital quantum computer is only capable of managing a finite number of qubits. The idea of \cite{Jordan:2011ne,Jordan:2011ci} is to bound the field value by $\phi_{\max}$ with discretize step size as $\delta\phi$ at each site, i.e.,
\begin{align}
\{  - {\phi _{\max }}, - {\phi _{\max }} + \delta \phi , \ldots ,{\phi _{\max }} - \delta \phi ,{\phi _{\max }}\}~,\label{bound of field}
\end{align}
such that the original Hilbert space is truncated to be finite-dimensional, allowing to encode the scalar quantum field theory with finite qubits into quantum computers. A key result of \cite{Jordan:2011ne,Jordan:2011ci} is to determine the truncation of the Hilbert space and the number of qubits per site by the scattering energy $E$. To simulate the quantum field scattering process, initial scattering states should be excited. As described in \cite{Jordan:2011ci}, this task can be fulfilled by preparing the initial vacuum state of free theories $|\Omega_{\rm free}\rangle$ followed by exciting the wave packets (which are represented by $|\psi\rangle=a_{\psi}^\dagger|\Omega_{\rm free}\rangle$) based on the prepared vacuum, where exciting the wave packets can be realized by involving an auxiliary Hamiltonian with additional ancilla qubits
\be
H_\psi=a_\psi^\dagger \otimes|1\rangle\langle0|+a_\psi \otimes|0\rangle\langle1|\,,\quad e^{-iH_\psi \fft{\pi}{2}}|\Omega_{\rm free}\rangle
\otimes |0\rangle=-i|\psi\rangle\otimes |1\rangle\,.
\ee
Adiabatically turning on the interaction for the wave packets, they will evolve in time to an ending time, which is precisely the scattering process. Simulating the time evolution is a well-known task, and \cite{Jordan:2011ci} have done this by simply splitting the Hamiltonian into field and field momentum piece, then apply the product formula of Trotter (which will be briefly introduced in subsection \ref{Trotter}). As the completion of the whole time evolution of the scattering process, it allows measuring any physical observable that satisfies our simulation goal.

As a summary of the above review, the original version of the algorithm consists of the following prescriptions:
\begin{algorithm}[Jordan-Lee-Preskill] The original Jordan-Lee-Preskill algorithm is an algorithm for simulating the $\lambda \phi^4$ scalar quantum field theory in general spacetime dimensions at both weak or strong couplings. It is given by the following steps.
\begin{itemize}
\item Encoding. We encode the lattice field theory Hilbert space into the quantum computer. The truncation of the original Hilbert space and the number of qubits encoded per site are determined by the scattering energy $E$.
\item Initial state preparation. We construct the initial state using an algorithm proposed by Kitaev and Webb \cite{kitaev2008wavefunction} for constructing multivariate Gaussian superpositions. The algorithm could be improved by some other classical methods \cite{bunch1974triangular,coppersmith1987matrix}.
\item Exciting the wave packets in the free theory. This part is done by introducing ancilla qubits.
\item Adiabatic state preparation. We adiabatically turn on the interaction to construct the wave packet in the interacting theory. The speed of adiabatic state preparation should be slow enough to make sure the resulting wave packet is still a reasonable single-particle wave packet.
\item Trotter simulation. We use the product formula to simulate the time evolution $e^{-iHt}$ by splitting the Hamiltonian to the field piece and the field momentum piece.
\item Measurement. After time evolution, we compute the correlation function we are interested in using the quantum circuit. We could measure field operators, number operators, stress tensor, and other quantities we are interested in.
\end{itemize}
\end{algorithm}
\subsection{Our generalization of Jordan-Lee-Preskill}
The original Jordan-Lee-Presklll algorithm we have described above needs further modifications in order to be applicable to cosmic inflation. Comparing to the original scattering process, the evolution process for computing cosmic correlation functions has the following features.
\begin{itemize}
\item In the initial state preparation, we are not setting the scattering energy to be $E$ since we are not doing exactly the scattering experiment. Instead, in cosmic inflation, we are able to set the energy scale of the inflationary perturbation theory $\Lambda=1/b$. We could use $\Lambda$ instead to bound the energy scale, and then the field momentum fluctuation, which will restrict the dimension of the truncated Hilbert space. A similar analysis in inflationary physics could help us bound the field configuration itself.
\item In the scattering experiment in the flat space, the Hamiltonian in the $\lambda \phi^4$ theory is static in the Heisenberg picture and also the Schr\"{o}dinger picture\footnote{However, in the interaction picture, the Hamiltonian is time-dependent.}. However, in the cosmic perturbation theory, the Hamiltonian should be time-dependent in general in both pictures. So the field basis we are using should be generically different for different times. In order to encode the time-dependent Hamiltonian and simulate the Heisenberg evolution
\begin{align}
U_{H}\left(\tau_\text{end}, \tau_{0}\right)=\mathbb{T} \exp \left[-i \int_{\tau_{0}}^{\tau_\text{end}} H_{H}\left(\tau^{\prime}\right) d \tau^{\prime}\right]~,
\end{align}
we need to figure out the field basis transformation depending on different times. This could be realized by computing the Green's identity in the free theory. We discuss this transformation as a reduced case of the HKLL formula \cite{Hamilton:2005ju,Harlow:2018fse}, which is usually appearing in AdS and involving both space and time.
\item We still need the Kitaev-Webb algorithm to prepare the free theory vacuum, but we don't need to excite the wavepacket, since our cosmological-motivated correlation functions are evaluated for the vacuum states.
\item In the adiabatic state preparation process, we start from the free theory at conformal time $\tau_0$, and then we use adiabatic state preparation to construct the interacting vacuum at $\tau_0$. Since the zero-momentum diagonal mode states and the ground state are degenerate in the free theory, as we have discussed before, extra treatment might be needed to split those states in the gapless regime. This treatment is called the ground state projection in the following discussions.
\item Then we use the Trotter simulation to evolve the time-dependent Heisenberg evolution. Note that due to possible mixings between quantum fields and field momenta we have in the Lagrangian, we generalize the original calculation in the Jordan-Lee-Preskill to the three-party product formula case, making use of results from the paper \cite{childs2019theory} and references therein.
\item Cosmic perturbation theories have certain measurement tasks that could be directly related to experimental observations. Here, we measure cosmic correlations
\begin{align}
\left\langle {{\Omega _{{\rm{in }}}}\left( {{\tau _0}} \right)\left| {{{\cal O}_H}(\tau_{\text{end}} )} \right|{\Omega _{{\rm{in }}}}\left( {{\tau _0}} \right)} \right\rangle  = \left\langle {{\Omega _{{\rm{in }}}}\left( {{\tau _0}} \right)\left| {U_H^\dag \left( {\tau_{\text{end}} ,{\tau _0}} \right){{\cal O}_H}\left( {{\tau _0}} \right){U_H}\left( {\tau_{\text{end}} ,{\tau _0}} \right)} \right|{\Omega _{{\rm{in }}}}\left( {{\tau _0}} \right)} \right\rangle~,
\end{align}
instead of other operators that are discussed in the original Jordan-Lee-Preskill algorithm.
\end{itemize}
Thus, we propose the following algorithm that is applicable to compute cosmic correlation functions.
\begin{algorithm}[Jordan-Lee-Preskill for cosmic inflation] We consider the following generalization beyond the original Jordan-Lee-Preskill algorithm, specifically applicable for cosmic correlation functions at general couplings.
\begin{itemize}
\item Encoding. We use the field basis at time $\tau_0$. The range and precision of the field basis are truncated based on the energy scale of EFT $\Lambda=1/b$, and some other inflationary physics. Furthermore, to encode the time-dependent Hamiltonian in the Heisenberg picture, we need the transformation from the time $\tau_0$ to a general time $\tau$. The transformation is given by the Green's identity as a special reduced case of the HKLL formula in the 3+1 dimensional de Sitter space.
\item Initial state preparation. We still use the Kitaev-Webb algorithm and its improvements to prepare the Gaussian vacuum state $\ket{\Omega_{\operatorname{free}}(\tau_0)}$. The variance matrix is given by the two-point function of the free theory.
\item Adiabatic state preparation. We use the adiabatic state preparation to prepare the interacting vacuum $\ket{\Omega_{\operatorname{in}}(\tau_0)}$. Extra treatment, namely the ground state projection, is needed to filter out the zero-momentum states of diagonal modes.
\item Trotter simulation. We use the Trotter algorithm to simulate the Heisenberg time-dependent time evolution. Note that including this evolution and also the adiabatic state preparation piece, we might face the mixing between field and field momenta in the Hamiltonian. Thus, we will use the three-party product formula to do the simulation.
\item Measurement. We measure $\left\langle {{\Omega _{{\rm{in }}}}\left( {{\tau _0}} \right)\left| {{{\cal O}_H}(\tau_{\operatorname{end}} )} \right|{\Omega _{{\rm{in }}}}\left( {{\tau _0}} \right)} \right\rangle$ after the evolution by, for instance, standard algorithms like the post-selection.
\end{itemize}
\end{algorithm}
More details in the above steps will be discussed in the following subsections.

\subsection{Encoding from the HKLL formula}
We start with the encoding problem in our quantum simulation program. At the time $\tau_0$, we define our field basis
\begin{align}
\hat \zeta ({\tau _0},{\bf{x}})\left| {\zeta ({\tau _0},{\bf{x}})} \right\rangle  = \zeta ({\tau _0},{\bf{x}})\left| {\zeta ({\tau _0},{\bf{x}})} \right\rangle~.
\end{align}
On the left hand side, $\hat{\zeta}(\tau_0,\mathbf{x})$ is understood as the curvature perturbation operator, and the state vector $\left| {\zeta ({\tau _0},{\bf{x}})} \right\rangle $ defines the eigenvector. The field value $\zeta (\tau,\mathbf{x})$ could be arbitrary for the fixed space and time  $\mathbf{x}$ and $\tau_0$. So for fixed $\mathbf{x}$, the local Hilbert space dimension is infinite.

Similarly, we define
\begin{align}
\hat \zeta ({\tau},{\bf{x}})\left| {\zeta ({\tau},{\bf{x}})} \right\rangle  = \zeta ({\tau},{\bf{x}})\left| {\zeta ({\tau},{\bf{x}})} \right\rangle~,
\end{align}
for an arbitrary time $\tau$\footnote{In most parts of the paper, we will not distinguish operators with their classical counterparts by the hat notation.}. Note that for different time slices, the state vector will be different in general, since the field operator $\hat \zeta ({\tau},{\bf{x}})$ and its eigenspace are different.

We will use the field basis $\left| {\zeta ({\tau _0},{\bf{x}})} \right\rangle $ to encode our Hamiltonian. Namely, we will represent all terms in our Hamiltonian on the above basis. For terms containing field momentum operators, the matrix elements could be determined by the canonical commutation relation.

However, the above treatment could only work for the initial time slice $\tau_0$. The reason is, our Heisenberg Hamiltonian is manifestly time-dependent. So the question is, can we determine the field operator and the field momentum operator at the time $\tau$ easily from $\tau_0$?

The answer is, yes! In fact, one could naively expect this to happen in the field equation. Since our encoding is based on the free system\footnote{One might worry about the existence of couplings might change the construction of the encoding basis. In fact, we don't really need to worry about it, since this is simply only a basis choice. For instance, we could use the harmonic oscillator basis to encode the $\lambda \phi^4$ strongly-coupled theory in the flat space, although at strong coupling, the harmonic oscillator in the free theory does not exist. In fact, the time evolution of the basis here illustrates how we \emph{define} our time-dependent quantum field theory.}, our field equation is linear. A naive choice is to use the Heisenberg evolution operator
\begin{align}
\zeta \left( \tau  \right) = U_H^\dag \left( {\tau ,{\tau _0}} \right)\zeta \left( {{\tau _0}} \right){U_H}\left( {\tau ,{\tau _0}} \right)~.
\end{align}
But the evolution operator $U_H$ itself contains operators later than $\tau_0$. Instead, we solve the dynamical equation, and the answer is expected to be linear
\begin{align}
\zeta (\tau ,{\bf{x}}) = \int_{{\partial _{{\tau _0}}}\text{LC}^-(\tau ,{\bf{x}})} {{d^3}y\left( {{K_\zeta }(\tau ,{\bf{x}};{\tau _0},{\bf{y}})\zeta ({\tau _0},{\bf{y}}) + {K_\pi }(\tau ,{\bf{x}};{\tau _0},{\bf{y}}){\pi _\zeta }({\tau _0},{\bf{y}})} \right)}~.
\end{align}
In the above equation, $\zeta$ and $\pi_{\zeta}$ are understood as operators, and the integration kernels $K_{\zeta}$ and $K_{\pi}$ are scalar functions. The kernel is supported on the set ${{\partial _{{\tau _0}}}\text{LC}^-(\tau ,{\bf{x}})}$, which is defined as part of the time slice $\tau_0$, intersecting with the past light cone starting from the point $(\tau,\textbf{x})$. The kernels should be supported on the set ${{\partial _{{\tau _0}}}\text{LC}^-(\tau ,{\bf{x}})}$ determined by the causal structure of the theory. Since our metric is manifestly conformally flat when we are using the conformal time, the light cone structures are the same as the flat space ones. Then we could directly write down
\begin{align}
{\partial _{{\tau _0}}}{\rm{L}}{{\rm{C}}^ - }(\tau ,{\bf{x}}) = \left\{ {({\tau _0},{\bf{\bar x}}) \in {\text{inflationary spacetime: }}\left| {{\bf{\bar x}} - {\bf{x}}} \right| < c_s\left| {\tau  - {\tau _0}} \right|} \right\}~.
\end{align}
Here, the notation $\abs{\ldots}$ is the same as the Euclidean distance. Note that here we also consider the non-trivial sound speed $c_s$. We use Figure \ref{HKLLdS} to illustrate the above encoding.

\begin{figure}[t]
  \centering
  \includegraphics[width=0.8\textwidth]{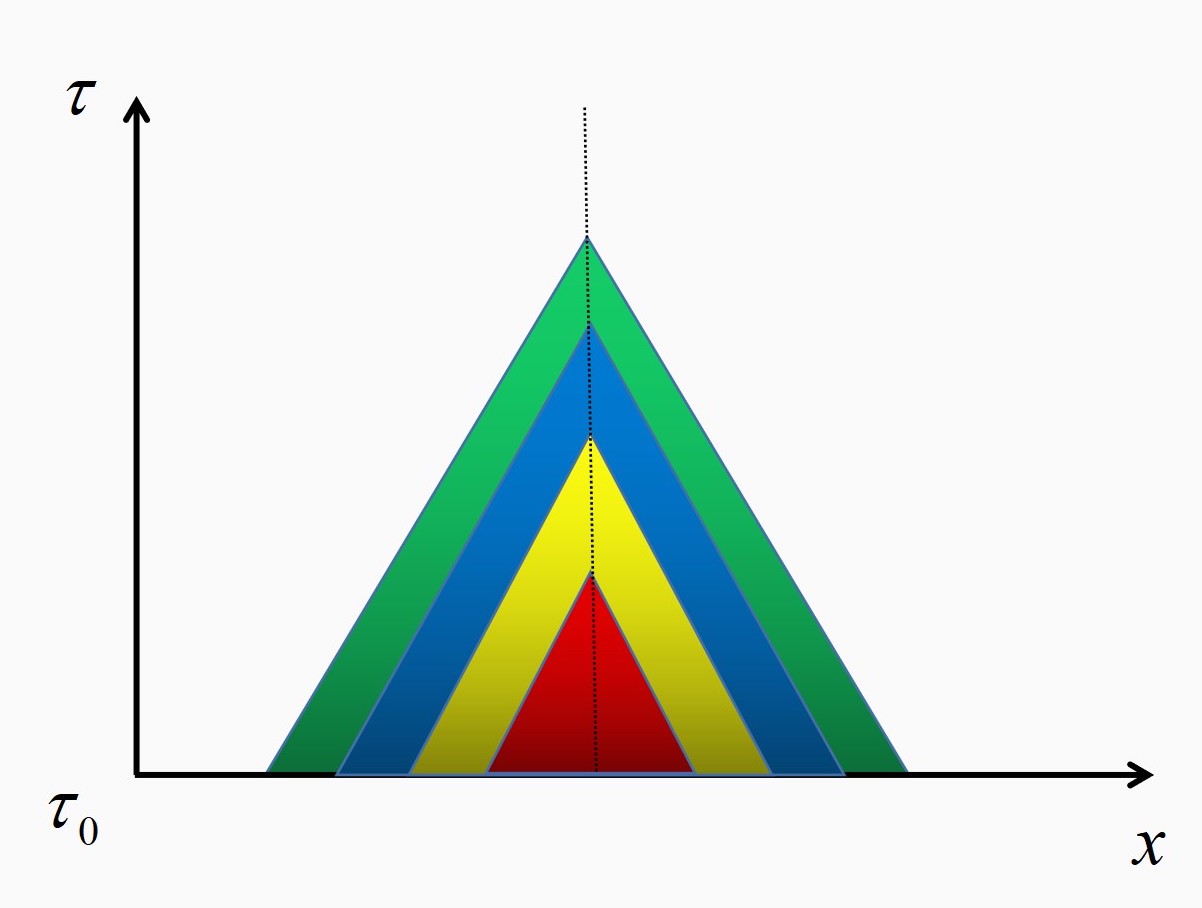}
  \caption{\label{HKLLdS} The encoding process from the past light cone. Here, we use a two-dimensional plot to illustrate past light cones for different $\tau$s at a fixed $\textbf{x}$. Those light cones are intersected with the time slice $\tau_0$. We use a one-dimensional real line to represent the direction of $\textbf{x}$, but in the real world, we have three spatial dimensions.}
\end{figure}

So how to determine the kernel? In our case, it is purely a PDE problem before we promote our variables to operators. In fact, it could be easily solved by Green's identity. Generally speaking, if we consider the relativistic theory for a Klein-Gorden scalar $\phi$ in 3+1 dimensions, the Green's function $G$ will satisfy the following linear equation
\begin{align}
{\nabla _\mu }{\nabla ^\mu }G(\tau ,{\bf{x}};{\tau _0},{\bf{y}}) = \frac{{{\delta ^3}({\bf{x}} - {\bf{y}})\delta (\tau  - {\tau _0})}}{{\sqrt { - g} }}~,
\end{align}
where the covariant derivative $\nabla^\mu$ is acting with respect to $(\tau_0,\mathbf{y})$. The corresponding Green's identity then reads,
\begin{align}
\phi (\tau ,{\bf{x}}) = \int_{{\partial _{{\tau _0}}}\text{LC}^-(\tau ,{\bf{x}})} {{d^3}} y\sqrt {|h|} \,{n^\mu }(\phi ({\tau _0},{\bf{y}}){\nabla _\mu }G(\tau ,{\bf{x}};{\tau _0},{\bf{y}}) - G(\tau ,{\bf{x}};{\tau _0},{\bf{y}}){\nabla _\mu }\phi ({\tau _0},{\bf{y}}))~,
\end{align}
where $h$ and $n$ are induced metric and unit normal vector for the surface ${\partial _{{\tau _0}}}\text{LC}^-(\tau ,{\bf{x}})$. In our case, the equation of motion for the Green's function is modified as
\begin{align}
\fft{2\epsilon}{c_s^2 H^2\tau_0^3}(c_s^2 \tau_0\partial_{y}^2+2\partial_{\tau_0}-\tau_0 \partial_{\tau_0}^2)G(\tau,\mathbf{x};\tau_0,\mathbf{y})=\delta^3(\mathbf{x}-\mathbf{y})
\delta(\tau-\tau_0)~.
\end{align}
The corresponding Green's identity for our purpose is
\begin{align}
\zeta(\tau,\mathbf{x})=\fft{2\epsilon}{c_s^2 H^2\tau_0^2}\int_{{\partial _{{\tau _0}}}\text{LC}^-(\tau ,{\bf{x}})} d^3y \big(G(\tau,\mathbf{x};\tau_0,\mathbf{y})\partial_{\tau_0}\zeta(\tau_0,\mathbf{y})-\zeta(\tau_0,\mathbf{y})\partial_{\tau_0}
G(\tau,\mathbf{x};\tau_0,\mathbf{y})\big)~.
\end{align}
Thus the kernels are given by
\begin{align}
K_\zeta(\tau,\mathbf{x};\tau_0,\mathbf{y})=-\fft{2\epsilon}{c_s^2 H^2\tau_0^2} \partial_{\tau_0}
G(\tau,\mathbf{x};\tau_0,\mathbf{y})~,\quad K_\pi(\tau,\mathbf{x};\tau_0,\mathbf{y})=G(\tau,\mathbf{x};\tau_0,\mathbf{y})~.
\end{align}
In our case, since we already extract the past light cone, we could take the Green's function directly to be the Wightman's two-point function
\begin{align}
G(\tau,\mathbf{x};\tau_0,\mathbf{y})=\int \fft{d^3k}{(2\pi)^3}v_k(\tau)v_k^\ast(\tau_0)e^{i\mathbf{k}\cdot(\mathbf{x}-\mathbf{y})}~.
\end{align}
Taking a derivative of the time coordinate, we will get the corresponding formula for the field momentum.

We will use the above formula to encode the Hamiltonian. What is the nature of the above encoding? In fact, it could be regarded as a reduced version of the HKLL formula in the study of the AdS/CFT correspondence (see the paper \cite{Hamilton:2005ju} and a review \cite{Harlow:2018fse}). In AdS/CFT, a typical problem is to determine the bulk data from the boundary dynamics. The HKLL formula describes how we write the bulk operator from the boundary in the semiclassical theory
\begin{align}
{\mathcal{O}_{{\text{bulk}}}} = \int {\text{HKLL kernel} \times {\mathcal{O}_{{\text{boundary}}}}}~,
\end{align}
from solving PDEs. In the AdS case, the situation is more complicated since space and time are mixed: the boundary contains the time direction. The allowed accessible range in the bulk determined by a given range of the boundary is called the \emph{causal wedge}. Figure \ref{adsrind} illustrates a standard example of the causal wedge reconstruction in the case of AdS-Rindler in three spacetime dimensions \cite{Almheiri:2014lwa,Dong:2016eik}. In our cosmology case, a little difference comparing to AdS/CFT is that right now we understand the time direction as the ``boundary", so we only need the past light cone instead of the full causal wedge. See another discussion about the HKLL formula and bulk reconstruction in \cite{Lewkowycz:2019xse}.

\begin{figure}[t]
  \centering
  \includegraphics[width=0.4\textwidth]{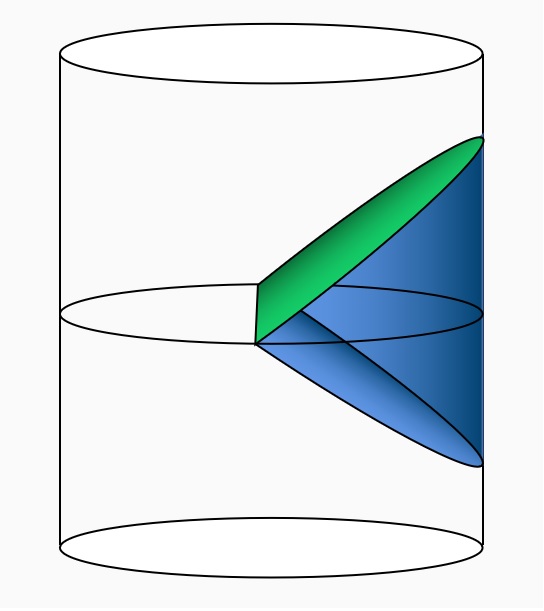}
  \caption{\label{adsrind} The AdS-Rindler bulk reconstruction. Here the bulk operators could be reconstructed from the boundary using the HKLL reconstruction formula. This is the standard example mentioned in, for instance, \cite{Harlow:2018fse}.}
\end{figure}

It is remarkable that we are able to use the HKLL formula for a different purpose beyond the usual study of AdS/CFT. This seems to indicate that studying the nature of spacetime might be closely related to quantum simulation in quantum gravity. Furthermore, it is worth noticing that in our discussion, we use the word ``encoding" in a completely different circumstance. The word ``encoding" we use here means that we are going to make our qubits from our brains to quantum computers, while the word ``encoding" in other literature about AdS/CFT usually means the encoding map in the quantum error correction code \cite{Almheiri:2014lwa,Pastawski:2015qua,Dong:2016eik}. Moreover, the encoding we discuss only requires the free theory. Hence, we are consistent with the semiclassical description and satisfied with the ``causal wedge reconstruction".

It will be interesting to study how the encoding of different time slices will change when turning on the coupling. From the perturbative point of view, we have to receive tree or loop corrections in the Witten diagrams. Non-perturbatively, the causal wedge reconstruction might be replaced by the entanglement wedge reconstruction, and the HKLL formula might be replaced by the modular Hamiltonian and the modular flow. It might be interesting to see how the story will go both in AdS and dS cases. There is an insightful discussion recently about holographic scattering and entanglement wedge \cite{May:2019odp}. Finally, when we are studying the scattering problem in AdS in the quantum computer, probably we might consider using the honest HKLL formula for encoding since, in AdS, space and time direction is mixed. Moreover, one might consider the discretized version of our de Sitter encoding formula in the lattice, and it should not be very hard to obtain since we currently only care about the free theory.

We end this subsection by commenting on the complexity we need to perform encoding. Obviously, the number of terms we need in the encoding map is proportional to the number of sites included in the past light cone regime on the time slice of $\tau_0$. So we have the complexity estimate:
\begin{align}
\text{Encoding Complexity} = O{\left( {\frac{{{c_s}\left| {{\tau _{{\rm{end}}}} - {\tau _0}} \right|}}{b}} \right)^3} = O{\left( {\frac{{{c_s}{\abs{\tau _0}}(1 - {e^{ - N}})}}{b}} \right)^3}~.
\end{align}
Here $N$ is the e-folding number during inflation.

\subsection{Encoding bounds from the EFT scale}
Here, we continue our discussions about the encoding. The ideal study in quantum field theory with infinite-dimensional Hilbert space is not promising for a digital quantum computer, so we have to discretize our field basis and make further truncations.

Now, let us consider the following prescription of truncations. We want to bound the range of the curvature perturbation $\zeta$ at the time $\tau_0$ by $\zeta_{\max}$. Furthermore, we wish the step size (precision) of the discretization of the field value to be $\delta \zeta$. Namely, our choices of field values on each site are exactly eq.~\eqref{bound of field} but replacing $\phi$ by $\zeta$. As a result, the number of qubits is estimated as
\begin{align}
{n_b} \sim \log \left( {\frac{{{\zeta _{\max }}}}{{\delta \zeta }}} \right)~.
\end{align}
How to choose the value of $\delta \zeta$ and $\zeta_{\max}$? Intuitively, we know that if the field fluctuation $\zeta$ is bounded probabilistically (for instance, in terms of expectation values), then we cannot make that much error if we choose $\zeta_{\max}$ to be comparable to the field fluctuation bounds. This intuition is explicitly proved by the original paper of Jordan-Lee-Presklll \cite{Jordan:2011ne,Jordan:2011ci}. We call it the ``Jordan-Lee-Preskill bound". In fact, assuming an $\epsilon_{\text{JLP}}$ error in truncation, the probability of the field values appearing outside the truncation window is controlled by the Chebyshev inequality for all possible probability distributions. For all sites, the total probability outside the truncation window $p_{\text{total}}$ is controlled by the union bound $\mathcal{V} p_{\text{single}}$, where $\mathcal{V}$ is the total number of sites and $p_{\text{single}}$ is the probability of making error for a single site.

Here, we just quote the result from the Jordan-Lee-Preskill bound without proving it. Say that we truncate the field to obtain the state $\ket{\psi_{\text{JLP}}}$ and we introduce the error $\epsilon_{\text{JLP}}$ such that $\left\langle {{\psi _{{\rm{true}}}}|{\psi _{{\rm{JLP}}}}} \right\rangle  = 1 - {\epsilon _{{\rm{JLP}}}}$, where $\psi_{\text{true}}$ is the actual state. We have
\begin{align}
{\zeta _{\max }}\sim \sqrt {\frac{\mathcal{V}}{{{\epsilon _{{\rm{JLP }}}}}}} \sqrt {{{\langle {\psi _{{\rm{true}}}}|\zeta^2 |{\psi _{{\rm{true}}}}\rangle }}}~.
\end{align}
The square root of the prefactor comes from the quadratic relation in the Chebyshev inequality.

Then, how could we bound the precision? It is easy to notice that from the definition of the canonical commutation relation we have
\begin{align}
{\pi _{\max,\zeta }}\sim \frac{1}{{{b^3}\delta \zeta }}~.
\end{align}
Applying the same Jordan-Lee-Preskill bound towards the field momentum, we get
\begin{align}
&\delta \zeta \sim \frac{1}{{{b^3}}}\sqrt {\frac{{{\epsilon _{{\rm{JLP }}}}}}{\mathcal{V}}} \frac{1}{{\sqrt {\langle {\psi _{{\rm{true}}}}|\pi _\zeta ^2|{\psi _{{\rm{true}}}}\rangle } }}~,\nonumber\\
&{n_b}\sim \log \left( {\frac{{\mathcal{V}{b^3}}}{{{\epsilon _{{\rm{JLP }}}}}}\sqrt {\langle {\psi _{{\rm{true}}}}|\pi _\zeta ^2|{\psi _{{\rm{true}}}}\rangle \langle {\psi _{{\rm{true}}}}|{\zeta ^2}|{\psi _{{\rm{true}}}}\rangle } } \right)~.
\end{align}

So how could we bound $\left\langle {{\zeta ^2}} \right\rangle $ and  $\left\langle {{\pi_{\zeta} ^2}} \right\rangle $? Now, we need some knowledge about cosmology. We start from $\pi_\zeta$. Since we already know that our ultraviolet cutoff is $1/b \sim \sqrt{H \epsilon} $, the cutoff must work for a single term in the Hamiltonian
\begin{align}
\frac{{{b^3}c_s^2}}{{{a_0^2}\epsilon }}\left\langle {\pi _\zeta ^2} \right\rangle  \lesssim \frac{1}{b}~.
\end{align}
So we get
\begin{align}
\left\langle {\pi _\zeta ^2} \right\rangle  \lesssim \frac{{{a_0^2}\epsilon }}{{{b^4}c_s^2}}~.
\end{align}
Note that here $a_0$ is the initial scale factor. The way of bounding quantities using the total energy is the same as what we did for the original Jordan-Lee-Preskill scattering experiment in the flat space. Furthermore, the bound could serve as a general bound working for all couplings.

However, how could we bound $\zeta$? At the late time $\tau_{\text{end}}$, we know experimentally that
\begin{align}
\left\langle {{\zeta ^2}} \right\rangle  \sim {10^{ - 10}}~,
\end{align}
which is purely from the experiment. In the early time, the situation may not be the same, and we could not naively use the bound from the late time. In the free theory, the inflaton is massless so we cannot bound $\zeta^2$ directly from the Hamiltonian. However, we could directly estimate the curvature perturbation from the free theory. We have
\begin{align}
&\left| {\left\langle {\zeta ({\tau _0},{\bf{x}})\zeta ({\tau _0},{\bf{y}})} \right\rangle } \right| = \left| {\int {\frac{{{d^3}k}}{{{{(2\pi )}^3}}}} {v_k}v_k^*{e^{i{\bf{k}} \cdot ({\bf{x}} - {\bf{y}})}}} \right| \le \left| {\int {\frac{{{d^3}k}}{{{{(2\pi )}^3}}}} {v_k}v_k^*} \right|\nonumber\\
&= \left| {\frac{{4\pi }}{{{{(2\pi )}^3}}}\int {{k^2}} {v_k}v_k^*dk} \right| \sim \left| {\frac{{4\pi }}{{{{(2\pi )}^3}}}\int_{{k_{{\rm{IR}}}}}^{{k_{{\rm{UV}}}}} {{v_k}v_k^*dk} } \right|\nonumber\\
&\sim \frac{{{H^2}\left( {c_s^2\tau _0^2\left( {k_{{\rm{UV}}}^2 - k_{{\rm{IR}}}^2} \right) + 2\log \left( {\frac{{{k_{{\rm{UV}}}}}}{{{k_{{\rm{IR}}}}}}} \right)} \right)}}{{16{\pi ^2}{c_s}\epsilon }} \sim \frac{{{H^2}\left( {\frac{{c_s^2\tau _0^2}}{{{b^2}}} + 2\log \left( {\hat L} \right)} \right)}}{{16{\pi ^2}{c_s}\epsilon }}~.
\end{align}
In the last step, we take the cutoff $k_\text{UV}=1/b$ and $k_\text{IR}=1/L$. One could see that the above result depends at most logarithmically on the system size. Thus, the dimension of local Hilbert space should be at most polynomial in size in general.

The above result is only a result of the free theory. But how about interacting theory at the time $\tau_0$? In general, the result should not be drastically changed if we are in the regime of the perturbation theory. The leading correction towards the above two-point function should be the one-loop diagram in the following plot, which is of order $\varepsilon^2$ if we call the coupling as $\varepsilon$\footnote{$\varepsilon$ is a combination of $\Sigma$, $\lambda$ and $1-c_s$, which we discuss before.}. (see Figure \ref{oneloop} for an illustration.) But the situation might change in the case of strong coupling. Now let us consider the system is approaching the critical point with a second-order phase transition. If such a critical point exists, the two-point function of curvature perturbation should scale as a power law with the distance and a scaling dimension, which is not a drastic dependence for our quantum computer. But what happens in general, in the middle of the renormalization group flow? Although it seems to be not very possible that the field fluctuation is exponential regarding the system size, since it is a non-perturbative problem, we could only make trials numerically if we do not have any theoretical control. In fact, assuming the field configuration is continuous, when constructing the state and measuring the field profile, we could actually get some indications if the size of the local Hilbert space is out of reach. Such trials will be helpful for determining an honest value of the field range up to some given error, with certain convergence conditions. We leave this topic for future research, especially for people with quantum devices and clean qubits.

\begin{figure}[t]
  \centering
  \includegraphics[width=0.5\textwidth]{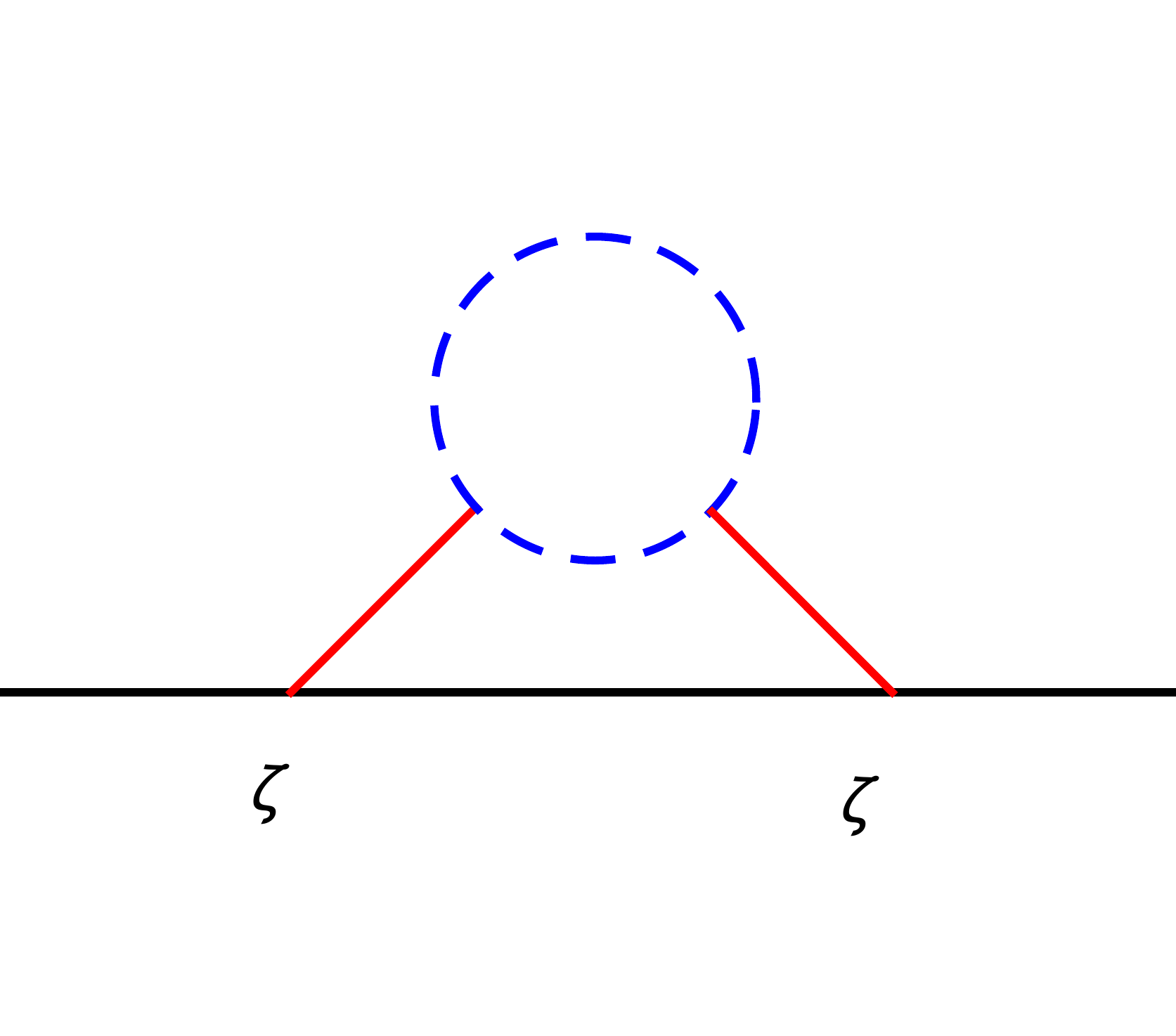}
  \caption{\label{oneloop} The one-loop diagram at the fixed time $\tau_0$. }
\end{figure}

\subsection{Initial state preparation by Kitaev and Webb}
After we address the encoding problem, here we discuss the initial state preparation. At the beginning, we wish to construct $\ket{\Omega_\text{free} (\tau_0)}$ in the quantum computer. In the free theory, the wave function is given by the Gaussian distribution in the field basis, with the probability distribution
\begin{align}
p(\vec \zeta ) = \frac{1}{{{{(2\pi )}^{{\cal V}/2}}|{\bf{M}}{|^{1/2}}}}\exp \left( { - \frac{1}{2}\vec \zeta  \cdot {{\bf{M}}^{ - 1}} \cdot \vec \zeta } \right)~.
\end{align}
Here, we define
\begin{align}
\vec{\zeta} = \left(\zeta\left(\mathbf{x}_{1}\right), \zeta\left(\mathbf{x}_{2}\right), \cdots, \zeta\left(\mathbf{x}_{\mathcal{V}}\right)\right)~,
\end{align}
and the matrix $\mathbf{M}$ is the two-point function
\begin{align}
{\mathbf{M}_{ij}} = \left\langle {\zeta ({\mathbf{x}_i})\zeta ({\mathbf{x}_j})} \right\rangle = G(\tau_0, \textbf{x}_i; \tau_0, \textbf{x}_j)~.
\end{align}
The square root of the probability distribution could define the components on the field basis. Thus, the problem of state preparation becomes a problem of preparing the Gaussian distribution with multiple variables.

This problem is discussed and solved in \cite{Jordan:2011ne,Jordan:2011ci}, and here we describe the solution. We could directly use the Kitaev-Webb algorithm \cite{kitaev2008wavefunction} to prepare the vacuum state. The idea is that one could firstly prepare a distribution of Gaussian state in a diagonal form, and then do a transformation to the desired basis. The main time cost in the algorithm is the singular value decomposition of the inverse covariance matrix we have in the Gaussian distribution, which could be improved by using classical algorithms in \cite{bunch1974triangular,coppersmith1987matrix}. With the known covariance matrix given by the two-point function, the complexity scales as $O(\mathcal{V}^{2.376})$, which is bounded by polynomials in system size.

There might exist some alternative methods for constructing Gaussian states. For instance, \cite{somma2015quantum} describes another algorithm for Gaussian state preparation, which is related to one-dimensional quantum systems. \cite{macridin2018digital} describes another variational algorithm for preparing the Gaussian state. There might be some future improvements about the Kitaev-Webb algorithm in 3+1 dimensions.

\subsection{Trotter simulation}
\label{Trotter}
Now, say that we already have a state $\ket{\Omega_{\text{free}}(\tau_0)}$. The next steps are to construct the interacting vacuum $\ket{\Omega_{\text{in}}(\tau_0)}$ and then evolve the Heisenberg unitary operator. Both steps are requiring Trotter simulation based on the product formula. Comparing to the flat space, the task we have here in the inflationary spacetime is pretty different in the following two aspects,
\begin{itemize}
\item Our time-dependent quantum field theory is different from the flat space by a scale factor. The scale factor will affect Trotter simulation errors by entering the commutators.
\item In the original Jordan-Lee-Preskill algorithm, the $\lambda \phi^4$ quantum field theory Hamiltonian could be split into two parts, $H_{\phi}$ and $H_{\pi}$ that contain fields and field momenta separately. However, in our case, there is a generic feature that we will have mixing terms between fields and field momenta.
\end{itemize}
The above differences motivate us to describe a generalized, time-dependent Trotter simulation theory, which will be presented in this subsection. We will mostly use the notation from \cite{Jordan:2011ne,Jordan:2011ci} and \cite{childs2019theory}.

We consider a general time-dependent time evolution operator, and we split the total time into $n_{\text{product}}$ intervals
\begin{align}
\mathbb{T}\exp \left( { - i\int_{{\tau _0}}^\tau  d \tau 'H(\tau ')} \right) \approx \prod\limits_{j = 1}^{{n_{{\rm{product}}}}} {\exp } \left( { - iH\left( {\frac{{j - 1}}{{{n_{{\rm{product}}}}}}(\tau  - {\tau _0})} \right)\frac{{\tau  - {\tau _0}}}{{{n_{{\rm{product}}}}}}} \right)~.
\end{align}
We could define a short-hand notation
\begin{align}
\frac{{j - 1}}{{{n_{{\rm{product }}}}}}\left( {\tau  - {\tau _0}} \right) = {\tau _j}~.
\end{align}
For each term in this exponential, we have the $k$-th order product formula
\begin{align}
\exp \left( { - iH\left( {{\tau _j}} \right)\frac{{\tau  - {\tau _0}}}{{{n_{{\rm{product}}}}}}} \right) = {\text{product term}} + O\left( {{\alpha _{{\mathop{\rm com}\nolimits} }}\frac{{{{(\tau  - {\tau _0})}^{2k + 1}}}}{{n_{{\rm{product}}}^{2k + 1}}}} \right)~.
\end{align}
The detailed expression in various forms is given in \cite{childs2019theory} in detail. The error constant $\alpha_{\text{com}}$ is determined by commutators
\begin{align}
{\alpha _{{\rm{com}}}} = \sum\limits_{{\beta _1}, \cdots ,{\beta _{2k + 1}} = 1}^{{n_{{\rm{split}}}}} {\left\| {\left[ {{H_{{\beta _{2k + 1}}}}, \cdots ,\left[ {{H_{{\beta _2}}},{H_{{\beta _1}}}} \right]} \right]} \right\|}~,
\end{align}
where we expand the total Hamiltonian by
\begin{align}
H = \sum\limits_{\beta  = 1}^{{n_{{\rm{split}}}}} {{H_\beta }}~.
\end{align}
In our case, we assume $n_{\text{split}}=3$. Physically, in our situation, we have the field variable term, the field momentum term, and the mixing term, where we call them $H_1$, $H_2$ and $H_3$.
\subsection{Adiabatic state preparation and the ground state projection}
We start from the adiabatic state preparation with applications of our Trotter simulation formula.

Generically, for a generic Hamiltonian $H(s)$ which is parametrized by $s$ ranging in $[0,1]$, we call the $\ell$-th eigenstate $\ket{v_\ell(s)}$ for a given $s$,
\begin{align}
H(s)\left| {{v_\ell }(s)} \right\rangle  = {e_\ell }(s)\left| {{v_\ell }(s)} \right\rangle~.
\end{align}
The state $\ket{v}$ could be approximately achieved by the $s$-dependent time evolution from 0 to 1, starting from $s=0$. We call such state $\ket{u_k}$ for the energy level $e_k$. The transition amplitude, where we call it the ``adiabatic error" $\epsilon_{\text{ad}}$, is given by
\begin{align}
{\epsilon _{\text{ad}}} \equiv \left| {\left\langle {{v_\ell }|{u_k}} \right\rangle } \right|\sim \left| {\frac{1}{{T {{\left( {{e_k} - {e_\ell }} \right)}^2}}}\frac{{dH(s)}}{{ds}}} \right|~.
\end{align}
Here, $T$ is the total time we use when we are turning on the interaction. In our situation, we wish to turn on the interaction slowly and linearly. So we have
\begin{align}
\frac{{dH(s)}}{{ds}} = {H_I}~,
\end{align}
and the adiabatic error is bounded by
\begin{align}
{\epsilon _{\text{ad}}} \lesssim  \frac{{\left\| {{H_I}} \right\|}}{{T \times {\rm{ga}}{{\rm{p}}^2}}}~.
\end{align}
Here, we are starting from the vacuum state of the free theory, so we are interpreting the energy difference as the mass gap in the Heisenberg Hamiltonian. As we have mentioned before, our Hamiltonian has a further problem: the theory is massless. We cannot only use energy to distinguish the vacuum state and zero-momentum states with multiple diagonal modes. Thus, some extra treatments should be used when we are doing the adiabatic state preparation.

Here, we propose the following ground state projection algorithm. We notice that the degeneracy problem among diagonal modes and the vacuum should only happen in free theory. Generically, we don't expect that the transition amplitude will be large for finite coupling constant. Namely, there should exist energy-level splitting when we are slowly turning on the coupling. Thus, we only need to resolve the tunneling around the free theory regime. Note that in the free theory, diagonal modes should carry positive particle numbers. Thus, starting from the free theory vacuum, we keep doing the measurement of the state in the quantum computer by measuring the following operator
\begin{align}
{N_{{\bf{k}} = 0}} = b_{{\bf{k}} = 0}^\dag {b_{{\bf{k}} = 0}}~.
\end{align}
We will only select the result when we get zero from the measurement data. The operator ${N_{{\bf{k}} = 0}} $ is exactly the number operator for zero-momentum diagonal modes. We need to perform the measurement for the first few steps in our simulation. After we obtain a significant amount of the coupling, we don't need to do this measurement, and around the finite coupling regime, the number operator has no physical meaning.

Using the above ground state projection protocol, we could bound our adiabatic error
\begin{align}
{\epsilon _{{\rm{ad}}}} \lesssim  \frac{{\left\| {{H_I}} \right\|{L^2}}}{T}~.
\end{align}
Note that the $1/L$ gap introduces extra polynomial factors in the system size. The explicit determination of the evolution time $T$ should come from the Trotter computation. In fact, we find
\begin{align}
{\epsilon _{{\rm{ad}}}} \lesssim {\varepsilon} \times {\left( {\frac{{\cal V}}{{{\epsilon _{{\rm{JLP}}}}}}} \right)^{2k + 3/2}}\frac{{\abs{\tau _0}^2{T^{2k + 1}}}}{{{n_{\text{ad}}^{2k}}}}\frac{1}{{{\epsilon ^{3/2}}}}~.
\end{align}
Here $n_{\text{ad}}$ is the number of splitting we use during the adiabatic process. Thus, the total gate estimate, ${n_{{\rm{ad,total}}}}$, scales as
\begin{align}
&{n_{{\rm{ad,total}}}}\sim {\cal V} \times {n_{{\rm{ad}}}}\nonumber\\
&\sim { O}\left( {{\varepsilon}{^{1/2k}} \times \frac{{{{\cal V}^{2 + 3/4k}}}}{{\epsilon _{{\rm{JLP}}}^{1 + 3/4k}}} \times {T^{1 + 1/2k}} \times {{\left( {\frac{{\abs{\tau _0}^2}}{{{\epsilon ^{3/2}}}}} \right)}^{1/2k}}} \right)~.
\end{align}
Note that here, we make use of the scaling of our basis norm for $\zeta$ and $\pi_{\zeta}$.

Finally, we comment briefly on the issue of the inflaton mass. Generically, the effective field theory of inflation might receive higher derivative corrections suppressed by the cutoff. If the inflaton is not protected, it will receive radiative corrections. A connection between the unstable inflaton mass and the parameter $\eta$ in slow roll is so-called the $\eta$-problem, suggesting that the radiative correction of the inflaton might prevent the inflationary expansion of the spacetime. In our case, we consider the lattice regularization of the theory by the ultraviolet cutoff $1/b$, and the theory in the short distance is still massless in the sense of lattice many-body systems. In a sense, we are regularizing the theory in the way that the inflaton mass is protected from being zero in the free theory case. Thus, we are away from the $\eta$-problem. However, when we turn on the interaction, the inflaton mass will receive corrections from the coupling, and the theory is generically gapped (although the gap might be very small).

\subsection{The efficiency of inflation}
Now say that we already obtain $\ket{\Omega_{\text{in}}(\tau)}$ based on the above adiabatic state preparation procedure. The next step is to simulate the following unitary operation acting on the state
\begin{align}
{U_H}\left( {\tau_{\text{end}} ,{\tau _0}} \right)\left| {{\Omega _{{\rm{in}}}}\left( {{\tau _0}} \right)} \right\rangle  = \mathbb{T}\exp \left[ { - i\int_{{\tau _0}}^{\tau_{\text{end}}}  H \left( {\tau '} \right)d\tau '} \right]\left| {{\Omega _{{\rm{in}}}}\left( {{\tau _0}} \right)} \right\rangle~,
\end{align}
using our Heisenberg Hamiltonian $H$. Again, we will use the Trotter formula to simulate the above computation. Note that now the Hamiltonian is honestly time-dependent. It is depending on the conformal time, which is different from the adiabatic state preparation case where we are slowly turning on the coupling.

Here, we could estimate the efficiency of the Trotter simulation during the expansion of the scale factor. Assuming the Trotter error $\epsilon_{\text{inflation}}$ and the number of time steps $n_{\text{inflation}}$, we have,
\begin{align}
{\epsilon _{{\rm{inflation}}}} \sim O\left( {\sum\limits_{j = 1}^{{n_{{\rm{inflation}}}}} {{\alpha _{{\rm{com }}}}} \left( {{\tau _j}} \right)\frac{{{{({\tau _{{\rm{end}}}} - {\tau _0})}^{2k + 1}}}}{{n_{{\rm{inflation}}}^{2k + 1}}}} \right) \le O\left( {{\alpha _{{\rm{com }}}}({\tau _{{\rm{end}}}})\frac{{{{({\tau _{{\rm{end}}}} - {\tau _0})}^{2k + 1}}}}{{n_{{\rm{inflation}}}^{2k}}}} \right)~.
\end{align}
Here, we use a bound about the time dependence, assuming the dominance of the late-time Hamiltonian because of the expansion of the scale factor.

Rigorously computing the error ${\epsilon _{{\rm{inflation}}}} $ is a hard problem. Here, we will make an intuitive analysis based on the time dependence. In fact, we expect that the time dependence of our three Hamiltonians $H_1$, $H_2$ and $H_3$ at the time $\tau$ will be bounded by
\begin{align}
\left\|  H_{1} \right\| \lesssim O(\frac{\mathcal{V}}{\epsilon_{\mathrm{JLP}}})~,~~~~~\left\|  H_{2} \right\| \lesssim  O(\frac{\mathcal{V}}{\epsilon_{\mathrm{JLP}}} \frac{1}{\tau^{2}})~,~~~~~ \quad \left\|  H_{3} \right\| \lesssim O\left(\frac{\mathcal{V}}{\epsilon_{\mathrm{JLP}}}\right)^{3/2}~.
\end{align}
Here we make a brief explanation on the above bound. The term $\mathcal{V}/{\epsilon_{\mathrm{JLP}}}$ comes from the bound of the field range of the Chebyshev inequality. Thus, the pure cubic term $H_3$ will scale with a power 3/2. The time dependence is purely coming from the counting of the geometric factor and the time dependence on the quantum fields. A remarkable feature is the $1/\tau$ dependence appearing in $H_2$. This is, in fact, coming from the cutoff logarithmic term that is independent of time in the solution of modes. We expect that this term will also be presented in the interacting theory in the short distance limit, especially since we are working under an exact lattice regularization of the quantum field theory. Thus, we expect that the Trotter constant will dominate at late time.

Now assuming the late-time dominance, an example of the dominant piece in the Trotter formula will scale in the following form
\begin{align}
\left[ {{H_1},\left[ {{H_2},\left[ {{H_1},\left[ {{H_2}, \cdots ,\left[ {{H_1},\left[ {{H_3},{H_2}} \right]} \right]} \right]} \right]} \right]} \right]\sim {\left( {\frac{{\cal V}}{{{\epsilon _{{\rm{JLP}}}}}}} \right)^{2k + \frac{3}{2}}}{\tau ^{ - 2k}}~.
\end{align}
Furthermore, the commutator will bring us an extra factor in $\mathcal{V}$,
\begin{align}
\left[ {{H_i},{H_j}} \right] = \sum\limits_{\bf{x}} {\sum\limits_{\bf{y}} {{\delta _{{\bf{x}},{\bf{y}}}}} }  \cdots  = \sum\limits_{\bf{x}}  \cdots~.
\end{align}
Thus, we have an estimate on the Trotter error,
\begin{align}
{\epsilon _{{\rm{inflation}}}}\sim O\left(\frac{{{{\cal V}^{2k + 5/2}}}}{{\epsilon _{{\rm{JLP}}}^{2k + 3/2}}}\frac{{{{\left( {{\tau _{{\rm{end}}}} - {\tau _0}} \right)}^{2k + 1}}}}{{{{(\left| {{\tau _{{\rm{end}}}}} \right|{n_{{\rm{inflation}}}})}^{2k}}}}\right)\sim O\left(\frac{{{{\cal V}^{2k + 5/2}}}}{{\epsilon _{{\rm{JLP}}}^{2k + 3/2}}}\frac{{\left| {{\tau _0}} \right|{{\left( {1 - {e^{ - N}}} \right)}^{2k + 1}}}}{{{{({e^{ - N}}{n_{{\rm{inflation}}}})}^{2k}}}}\right)~.
\end{align}
Thus, the total gate counting is given by
\begin{align}
{n_{{\rm{inflation,total}}}}\sim O\left( {\frac{{{{\left( {{\tau _{{\rm{end}}}} - {\tau _0}} \right)}^{1 + 1/2k}}{{\cal V}^{2 + 5/4k}}}}{{\left| {{\tau _{{\rm{end}}}}} \right|\epsilon _{{\rm{JLP}}}^{1 + 3/2k}}}} \right)\sim O\left( {\frac{{{{\left| {{\tau _0}} \right|}^{1/2k}}{{\left( {1 - {e^{ - N}}} \right)}^{1 + 1/2k}}{{\cal V}^{2 + 5/4k}}}}{{{e^{ - N}}\epsilon _{{\rm{JLP}}}^{1 + 3/2k}}}} \right)~.\label{gates of Trotter}
\end{align}
We will leave two comments here for the above formula
\begin{itemize}
\item The above formula is a clear example on how the e-folding number $N$ changes the efficiency of the Trotter simulation, and how the expansion history depending on $\tau$ will change the number of resources we demand. The calculation of Trotter constants might also be important for other situations as well, especially for other time-dependent quantum field theory or quantum many-body problems.
\item We have to admit that the above analysis is a rough estimate based on certain assumptions. Firstly, it assumes the late-time dominance of the Hamiltonian norms and drops the dependence on the coupling constant. Secondly, we make assumptions that there are certain terms dominating the Trotter series based on the late-time dominance assumption. A more careful analysis is useful to fully characterize the Trotter error in the inflation process, where we wish to leave those calculations for future research.
\end{itemize}

\subsection{Measurement}
Here we briefly discuss the issue about measurement. Say that we could already construct the state
\begin{align}
\ket{\text{result}}=U_{H}\left(\tau_{\mathrm{end}}, \tau_{0}\right)\left|\Omega_{\mathrm{in}}\left(\tau_{0}\right)\right\rangle~.
\end{align}
Then, measuring the expectation value
\begin{align}
\left\langle {\text{result}} \right|{\cal O}\left( {{\tau _0}} \right)\left| {\text{result}} \right\rangle~,
\end{align}
is a standard problem in quantum computation. Since we already know the operator $\mathcal{O}(\tau_0)$, we could do a probabilistic calculation using the post-selection experiment. The method is statistical, and the total cost we need to perform should scale at most polynomially in $1/\text{error}$.

We wish to make the following comments.
\begin{itemize}
\item It might be interesting to look at operators beyond simply curvature perturbations and non-Gaussianities. For instance, we might consider measuring tensor perturbations, other operators like energies or stress tensors in inflationary perturbation theory, or studying the nature of the interacting vacuum by checking the assumptions in the in-in formalism calculation in perturbative field theory.
\item The algorithm about the post-selection that we have mentioned here is statistical and probabilistic. One might consider some other algorithms which are deterministic. For instance, we could consider the quantum signal processing algorithm discussed in \cite{low2017optimal}. Those algorithms might demand extra effort about block encoding and qubitization, and demand some oracles to perform selections. We leave those developments applying to cosmology in future research.
\end{itemize}
\subsection{Errors, de Sitter trees, and loops}
Here we make a brief analysis of the error induced from the finite lattice to approximate the continuum.

From the standard quantum field theory argument, the lattice effect will result in equivalently a series of irrelevant operators in the Lagrangian, which are suppressed by the lattice spacing. The leading irrelevant operators preserving shift symmetry are at least dimension six, so the leading corrections should be $\dot{\zeta}^3$ or $\dot{\zeta} (\partial_i \zeta)^2$. Thus, the lattice effect will induce an error in non-Gaussianity scaling as $O(b^2)$. Those corrections are at tree level. Note that, those corrections are nothing but adding a small perturbative piece of couplings into the original interaction (which could be non-perturbative in principle), which completely makes sense because we are assuming the interaction of the general single-field inflation.

Corrections from a finite $b$ towards the two-point function should appear at the one-loop order, scaling as $b^4$. The analysis and the diagram are the same as the case in \ref{oneloop}, although the motivation is completely different from the previous discussion. Now, taking the interaction $\dot{\zeta}^3$ as an example, we could borrow the result from \cite{Senatore:2009cf} about this one-loop diagram using the language of the effective field theory of inflation. Using our notation, we start from the interaction
\begin{align}
S = {b^2}{\int {{d^3}xdt \times \kappa {a^3}\dot \zeta } ^3}~.
\end{align}
Here $\kappa$ is a dimensionful coupling. The result of the one-loop correction to the two-point function of the curvature perturbation in the momentum space should scale as
\begin{align}
{\text{one loop correction}} \sim \frac{{{b^4}{\kappa ^2}{H^{14}}}}{{{{\dot H}^4}}}\frac{1}{{{k^3}}} \times \log \left( {Hb} \right)~.
\end{align}
The paper \cite{Senatore:2009cf} claims that the result is consistent from both the cutoff and the dimensional regularization. In principle, one could use those one-loop formulas to control the systematic error produced by lattice regularization.

\newpage
\section{Final remarks}\label{remark}
In this paper, we present a complete analysis of the generalized Jordan-Lee-Preskill algorithm for inflationary spacetime. The algorithm contains the encoding analysis, initial state preparation, adiabatic state preparation, time evolution during inflation, and the measurement. We compute the time cost of the algorithm and argue that the complexity is polynomial in system size for quantum devices, sharpening the statement of the quantum-extended Church-Turing Thesis. The analysis includes various techniques from high energy physics to quantum information. We also make suggestions on the physical questions we could answer when running the algorithm in the quantum computer.

In this section, we will make further suggestions and comments on the future directions related to this paper.

\subsection{Improvement of the algorithm}
About the algorithm itself, there are still rooms for future improvement. We point out some of them as the following.
\begin{itemize}
\item About the encoding treatment, in this paper, we are working on the field basis at the initial time $\tau_0$ and using the HKLL-type formula to encode the field and the field momentum at a general time $\tau$. However, it might be useful to explore the possibility of other bases. For instance, one could consider encoding the field from $\tau_{\text{end}}$ and evolving it back. If one could realize the treatment, it might be more convenient to bound the field fluctuations since we have the late-time experimental data about the curvature perturbation. Furthermore, one could also consider encode the Hamiltonian in the momentum basis or in the harmonic oscillator basis, as we have discussed before.
\item About the Jordan-Lee-Preskill bound from EFT, in this paper, we make some analysis of the curvature perturbation and the field momentum perturbation using some knowledge from cosmology. It might be better to improve those bounds by taking more physics into considerations. For instance, in some specific models, those fluctuations might have better theoretical control. Moreover, it might be interesting to explore the phase diagram of couplings we have mentioned, in the general single-field inflation or some other models.
\item About the adiabatic state preparation, we discuss an algorithm here to filter out the degeneracy of zero-momentum diagonal modes, projecting towards the ground state. It might be interesting to make a more detailed analysis of how the error would scale during this process, both in the case of the free theory and the interacting theory with small couplings. It will also be helpful if we could explore the mass gap depending on couplings analytically or numerically. Those research might demand some help from the near-term quantum device. It might also be helpful to try some other algorithms in the adiabatic state preparation process, for instance, the Wan-Kim algorithm proposed recently in \cite{wan2020fast}.
\item About the Trotter evolution, in this paper, we mainly discuss some estimations, mostly focusing on the time dependence of the Hamiltonian. It might be helpful to make a more detailed analysis of all other parameters, for instance, the value of couplings, and some exact computations on the Trotter error numerically or analytically.
\item About the measurement and the representation of the Hamiltonian, it might be helpful to discuss some improvements of the algorithm using methods with oracle constructions, for instance, quantum signal processing. Those might be associated with other methods to represent the Hamiltonian, for instance, qubitization. See \cite{mcardle2020quantum} for a review about them.
\end{itemize}

\subsection{The quantum Church-Turing Thesis}
Here we make some comments about the quantum-extended Church-Turing Thesis. The quantum-extended Church-Turing Thesis is a claim about the capacity of quantum computation to simulate our real world. It claims that all physical process happening in the real universe, could be computed efficiently using the model of the quantum Turing machine or the quantum circuit. Namely, the thesis is claiming about the efficiency of simulation. As we all know, the cost scaling polynomially with the system size will satisfy computer scientists, while the cost scaling exponentially with the system size will be disappointing.

However, it might be challenging when we suggest an implementation of the quantum-extended Church-Turing Thesis in general relativity. As we all know, the theory of general relativity has the ambiguity of defining space and time, but quantum mechanics or quantum computation require the definition of time to be specific. More precisely, if the time cost represented in one coordinate suggests that the complexity is polynomial, it might be exponential in some other time coordinates. So the statement of the quantum-extended Church-Turing Thesis seems requiring a covariant definition if we believe that the low energy effective description of general relativity and the fundamental existence of time.

In general, the quantum-extended Church-Turing Thesis is widely believed to be correct, although we cannot really prove the statement. It will be very interesting if someone could prove it in some general assumptions, and it will also be interesting if someone could find some cases where the claim is violated. Furthermore, we could imagine that the quantum-extended Church-Turing Thesis could even serve as a swampland criterion in the landscape of theories: if we find in some models, the quantum-extended Church-Turing Thesis is violated, then the model might be unphysical or the operation corresponding to the time cost might be not allowed. Otherwise, we may accept that the model of the quantum Turing machine may not be universal and powerful enough, and we might consider using some other physical process to perform quantum computation.

Our analysis of Trotter simulation could potentially provide evidence for interpreting the quantum-extended Church-Turing Thesis as a swampland criterion, in terms of a potential connection with the Distance Conjecture (DC) \cite{Ooguri:2006in} and the Trans-Planckian Censorship Conjecture (TCC) \cite{Bedroya:2019snp}, which are known as swampland criteria in the literature. More precisely, according to eq.~\eqref{gates of Trotter}, the resource demand by the Trotter simulation of inflation approximately grows exponentially in e-folding number, indicating the failure of the quantum-extended Church-Turing Thesis for a large enough e-folding number. Indeed, both the DC and the TCC can be written in terms of the bounds on the e-folding number, see, e.g., \cite{Bedroya:2019snp,Agrawal:2018own}
\bea
&& \text{DC:}\qquad \Delta\phi\sim \sqrt{2\epsilon}N <\mathcal{O}(1)\,,
\cr &&
\cr && \text{TCC:}\qquad e^N<\fft{1}{H_{\rm end}}\,.
\eea

Current research suggests further issues about the quantum-extended Church-Turing Thesis in gravity. First, general relativity could seemingly bring us the ambiguity about the efficiency, but also the computability of the Turing machine in classical or quantum computation. A typical experiment of this type is named by Malament and Hogarth \cite{hogarth1996predictability}, suggesting that general relativity is able to solve the Halting Problem (see some comments in Appendix \ref{MH}, and there is also a related discussion about computability and the gravitational path integral \cite{geroch1986computability}). Furthermore, there are recent discussions about the quantum-extended Church-Turing and the holographic correspondence: we know that the entanglement entropy and the complexity in the boundary CFT should be generically computationally hard to evaluate, but in the dual side, the area and the volume in gravity seem to be computationally easy. The puzzle brings other further questions about the nature of holographic correspondence and the quantum-extended Church-Turing Thesis in quantum gravity (see \cite{Bouland:2019pvu,Gheorghiu:2020sko,Susskind:2020kti,Kim:2020cds,Yoshida:2020wpd}).

Here, we wish to point out that our calculations might provide a concrete framework to address the quantum-extended Church-Turing Thesis in the curved spacetime. In our work, we argue that the time cost should be polynomial when we are using the conformal coordinate. But how about the physical time coordinate? How about mixing space and time? One could try to proceed with a similar analysis following our paper. One could also try to generalize our discussions in other spacetime, like quantum field theories in pure AdS or black hole backgrounds.

\subsection{The role of constants in the Trotter algorithm}
Here we wish to point out another perspective in our Trotter simulation. Usually, the Trotter constant appearing in the commutator terms may not be that important, especially when the Hamiltonian is time-independent. However, the constant $\alpha_{\text{com}}$ appearing in our calculation is time-dependent during inflation. Here, the constant has clear physical meanings, carrying the geometric factor for the inflationary spacetime.

Here we wish to point out that the Trotter constant might be interesting physically in other circumstances, and it might be interesting to point out the physical meaning of the product formula. For instance, we know that the chaotic Hamiltonian might be harder to simulate comparing to the integrable Hamiltonian based on the intuition of the butterfly effect. Thus, the hardness of quantum simulation based on chaos might be shown in the Trotter commutators. If one could explicitly show some quantitative connections between the commutators in the Trotter formula and the out-of-time-ordered correlators, it will be interesting for a possible physical understanding of our quantum simulation algorithms. This might also be related to the recent discussions about improvements of the Trotter simulation algorithm using randomness techniques (see \cite{campbell2019random,sieberer2019digital,chen2020quantum,Ashwork}).

\subsection{Verifying statements in quantum field theories}
In our paper, we point out another issue regarding the value of quantum simulation. As we know, quantum field theories are hard to study, especially when the theories are strongly coupled. Sometimes, the arguments we made are physically intuitive but may not be completely rigorous that could satisfy mathematicians and computer scientists. Thus, we claim that quantum computation, if making our field theory problems exactly verifiable and numerically achievable, might provide us new venues to test and clarify our physical claims in quantum field theory. We believe that our example of the in-in formalism and the interacting vacua is not unique. The future computational power provided by quantum devices might force us to make everything clear and bound the error we make about our physical arguments. For instance, some claims in quantum field theories about duality, from AdS/CFT to duality webs applied to high energy physics and condensed-matter theory recently, might be checked and verified in some similar ways.

\subsection{Quantum gravity in the lab: analog and near-term simulation}
Our paper could be regarded as simulating a certain category of quantum gravitational theory in the digital quantum computer. There are some similar works recently about the quantum simulation of other perspectives of quantum gravitational theories, mostly related to AdS/CFT and black hole physics. For instance, the celebrated Sachdev-Ye-Kitaev (SYK) model and some related quantum gravity effects are considered to be simulated in the analog platforms \cite{Danshita:2016xbo,Landsman:2018jpm,Brown:2019hmk,Kobrin:2020xms} (see also digital simulations in \cite{Garcia-Alvarez:2016wem,Babbush:2018mlj,Xu:2020shn}, and also some early papers about analog simulation of cosmic inflation \cite{Fedichev:2003bv,Fischer:2004bf}). Thus, it might be interesting to consider those analog platforms in the cosmological setup (see \cite{Antonini:2019qkt,Lewkowycz:2019xse
}). For the quantum simulation in the near-term devices, it might be interesting to explore perspectives about variational quantum simulation, and hybrid quantum-classical algorithms making use of classical knowledge, for instance, the matrix product state \cite{Yuan:2020xmq}.

\subsection{Multi-field inflation and dS conjecture}

In this work, we discuss the simulation of single field inflation models of the early universe. The above calculations could be naturally done in the multi-field case. One application of those calculations could exam dS conjectures discussed in the recent literature. 

Single-field inflation model is considered to be possibly violating the dS conjectures \cite{Agrawal:2018own}. dS conjecture argues that scalar field theory consistent with quantum gravity has to satisfy the following criterion,
\be
|\partial_{\bar{\phi}} V|\geq c V\,,
\ee
where $c$ represents an $\mathcal{O}(1)$ constant. For single field inflation, as we can observe in eq.~\eqref{eq: slow roll}, this swampland criterion then implies
\be
\epsilon\sim\epsilon_V\geq \fft{1}{2}c^2\,,
\ee
which is in tension with the slow roll condition $\epsilon \ll 1$. 

However, it is claimed that the dS conjecture does not exclude all inflationary models, e.g., it turns out that the multi-field inflation can survive \cite{Achucarro:2018vey}. We expect the algorithm in this work can be extended to multi-field inflation and then test dS conjecture by simulating the physical process. For example, it is shown that when the angular velocity $\Omega$ of inflation trajectory in multi-field space is nearly a constant, heavy modes $\sigma$ orthogonal to inflation trajectory can be integrated out, and the resulting model can be effectively described by eq.~\eqref{action of inflation} supplemented with further interaction terms \cite{Tolley:2009fg,Achucarro:2010jv,Achucarro:2010da}, where the sound speed $c_s$ encodes the details of additional heavy modes
\be
c_s=(1+\fft{4\Omega^2}{M_\sigma^2})^{-\fft{1}{2}}\,.
\ee
Thus most of our algorithm and our analysis of complexity should still be applicable, except that we have to modify the interaction terms that may alter the details of evolution. On the other hand, this model does not necessarily violate the dS conjecture, because now we have
\be
\epsilon=\epsilon_V (1+\fft{\Omega^2}{9H^2})^{-1}\geq \fft{c^2}{2}(1+\fft{\Omega^2}{9H^2})^{-1}\,.
\ee 
Large enough $\Omega$ can then rescue the slow roll condition, and the lower bound is given in \cite{Achucarro:2018vey}
\be
\fft{\Omega}{H}\geq 3\big((\fft{c N}{\Delta})^2-1\big)^{\fft{1}{2}}\,,
\ee
where $\Delta$ is $\mathcal{O}(1)$ constant associated with the DC. This bound will be important if one aims to simulate such multi-field inflation without violating the dS conjecture. A successful simulation requires a delicate balance between the appropriate input of $\Omega$ and Trotter gates eq.~\eqref{gates of Trotter} when using specific e-folding number $N$.

\subsection{Other phases of cosmology}
It might also be interesting to extend this work to other phases of cosmology. For instance, there are paradigms in the early universe as alternatives of inflation (see, for instance, \cite{Wands:1998yp,Finelli:2001sr,Khoury:2001wf}). Several proposals, for instance, the ekpyrotic universe proposal in \cite{Khoury:2001wf} contains brane configurations in string theory, are related to non-perturbative physics. Furthermore, as is mentioned before, it might be interesting to validate the in-in computations of correlation functions in those spacetime backgrounds using quantum devices.

It will also be interesting to look at other cosmic eras. For instance, one might consider cosmic reheating (see Appendix \ref{RH}) and electroweak phase transition in the early universe, where those eras are closely related to strongly-coupled physics in quantum field theory, and are also observationally relevant \cite{JPtoappear1,JPtoappear2}.

It is also important to extend the simulation algorithm in the current paper to quantum field theory living in FLRW universes arising from time-dependent compactification of string theory \cite{Russo:2018akp,Russo:2019fnk}. Simulating these FLRW spacetimes may shed light on resolving the inconsistency between inflation scenario and dS conjecture \cite{Obied:2018sgi}.

\subsection{Holographic scattering, sub-AdS locality, and the computational complexity of AdS/CFT}
Another possibility of generalizing the current work is to look at scattering experiments in AdS. Specifically, one could prepare similar states made by mode functions in the AdS spacetime backgrounds. A typical example we might consider is the holographic scattering experiments, where we start from wave packets prepared near the boundary and shoot them into the bulk. Correlation functions might be computed by tree-level Witten diagrams semiclassically, or the dual correlation functions in the boundary CFT. When turning on the coupling, we might receive loop corrections. There are several confusions and discussions regarding this process that are related to the concept of locality beyond the semiclassical theory at the sub-AdS scale \cite{Gary:2009ae,Heemskerk:2009pn,Giddings:2009gj,Fitzpatrick:2010zm,ElShowk:2011ag,Fitzpatrick:2011hu,Fitzpatrick:2011dm,Maldacena:2015iua}. Those processes are well-motivated to study when we have the computational power from quantum devices.

When simulating the above theories in the quantum computer, one should firstly discretize the theory in the lattice. Some triangulation procedures might be needed for us to write a lattice theory in some AdS coordinates. Furthermore, one has to resolve similar problems, for instance, the encoding process, in order to generalize the Jordan-Lee-Preskill algorithm in the AdS spacetime. In the dual side, one has to construct a complete algorithm in order to estimate some correlation functions in the boundary CFT, which might be made by some critical points of spin chains.

This problem is also helpful for the conceptual problem we mentioned before about the quantum Church-Turing Thesis. Naively, if we have the boundary large-$N$ CFT that is nearly a generalized free theory, the dual theory is nearly a free particle propagating in the pure AdS. In this setup, both sides should be computationally easy to simulate, although it might be interesting still to concretely come up with some algorithms. However, the situation might be changed if we include a black hole in the AdS background, and the dual theory has some higher energy states describing black hole microstates. We feel that this is a concrete setting to check the computational complexity of the AdS/CFT correspondence in the original sense of Witten \cite{Witten:1998qj}. If we could really find there are processes here that are computationally hard, it might be an indication that the quantum-extended Church Turing Thesis is violated. Furthermore, in this concrete setting, one might be able to verify the statement made by Susskind \cite{Susskind:2020kti}. Digging this problem deeper, we might find some relations between the simulation problem and some fundamental aspects of holography and quantum information theory, for instance, pseudorandomness \cite{Kim:2020cds} and quantum entanglement \cite{May:2019odp}.

\subsection{Other generalizations and further open problems}
Here we propose some further generalizations of our work.
\begin{itemize}
\item It might be interesting to study the complexity class of the problems in cosmic perturbation theory. For instance, one could try to identify the complexity class of the computational tasks presented in this paper in a more rigorous way, similar to the study in \cite{jordan2018bqp}. More precisely, we wish to prove that the cosmic perturbation theory is BQP-complete rigorously in the quantum information language.
\item It might be interesting to make a full investigation on the commutators appearing in the Trotter formula and precisely compute the Trotter constant, even for the quantum field theories in the flat space, for instance, the $\lambda \phi^4$ theory or even the free theory. It is also interesting to study the physical implications of the Trotter formula in the continuum limit.
\item It might be interesting to relate the discussions about quantum simulation in the bulk of the de Sitter space to the future time slice $\tau_{\text{end}}$, namely, the ``boundary". Those discussions are potentially related to some proposals of realizing de Sitter space in a dual theory, for instance, the dS/CFT correspondence and some other proposals \cite{Strominger:2001pn,Maloney:2002rr,Alishahiha:2004md,Dong:2018cuv,Lewkowycz:2019xse,Chen:2020tes,Hartman:2020khs}.
\item It might be interesting to investigate further about swampland conjectures that have been recently discussed (see, for instance,  \cite{Obied:2018sgi,Ooguri:2018wrx,Kachru:2018aqn,Bedroya:2019snp}). Although we are focusing on the effective field theory point of view in this paper, we expect that one day, one could construct quantum algorithms from the ultraviolet perspective and investigate further about those conjectures, besides the swampland criterion about the Church-Turing Thesis condition that has been discussed before.
\end{itemize}

\section*{Acknowledgement}
We thank Robert Brandenberger, Alex Buser, Cliff Cheung, Hrant Gharibyan, Masanori Hanada, Jim Hartle, Masazumi Honda, Isaac Kim, Hank Lamm, Ying-Ying Li, Don Marolf, David Meltzer, Ash Milsted, John Preskill, Burak Sahinoglu, Eva Silverstein, David Simmons-Duffin, Yuan Su, Jinzhao Sun, Guifre Vidal, Dong-Gang Wang, Yi Wang, Mark Wise, Zhong-Zhi Xianyu and Xiao Yuan for discussions related to this paper. JL is supported in part by the Institute for Quantum Information and Matter (IQIM), an NSF Physics Frontiers Center (NSF Grant PHY-1125565) with support from the Gordon and Betty Moore Foundation (GBMF-2644), and by the Walter Burke Institute for Theoretical Physics.

\newpage
\appendix

\section{Comments on computation and quantum gravity}\label{MH}
In this section, we make some brief discussions about computation and quantum gravity. There are several historical comments that are related to this topic, for instance, insightful discussions in \cite{Aaronson:2005qu}.

As we mentioned before, the concept of computation is naturally associated with the definition of time, which requires more consideration when we mix the definition of time and space in the theory of relativity. In fact, if we consider this problem in special relativity, we could boost the observer, and the computation might get some speedups in the observer's inertial frame. However, it requires extra energy to perform the boost. Thus, in terms of special relativity, it seems that we have to consider the computational resource by merging the time cost and the energy cost together, which is suggested, in a sense, by the Bekenstein bound \cite{Bekenstein:1980jp}, which relates energy and entropy \cite{Aaronson:2005qu}. How is this related to cosmic inflation, where we are simulating the spacetime where the energy is not conserved and even a free lunch during the cosmic expansion in this paper?

More weirdness might appear when we consider the theory of general relativity. Taking the boost example mentioned before, one could boost the particle drastically near the horizon of a black hole. Moreover, we could even construct spacetimes where one could see some entire worldlines by traveling a finite amount of time in the inertial frame. This is so-called the Malament-Hogarth spacetime \cite{hogarth1996predictability}. In the general relativity language, it means that there exist a past-extendible time-like curve and a point (event), such that the curve itself is infinitely long, but the curve is part of the time-like past of the point. One could construct such events in the Kerr or AdS spacetimes. Thus they could be Malament-Hogarth (see \cite{Welch:2006va}). In the Malament-Hogarth spacetime, one could even solve the Halting Problem.

When facing the above weirdness in general relativity, a natural argument is to admit that we have to use quantum gravity when defining those computational tasks. The construction of infinite boosts in general relativity requires the Trans-Planckian statement, where we have to discuss short-distance physics that are smaller than the Planck length. Thus, it is expected that the Malament-Hogarth construction is not physical, and the quantum gravitational backreactions will prevent it from happening. Some related questions are addressed by \cite{Hayden:2007cs,Yoshida:2020wpd,geroch1986computability} on the black hole physics, or the gravitational path integral.

However, before a full equipment of quantum gravity theories, we could make some ``temporary choices" when we are simulating some specific theories,
\begin{itemize}
\item We could make some specific choices on time coordinates before studying how covariant the system is. For instance, we choose conformal coordinates in this paper. Some similar coordinates include the Poincare coordinates in AdS, or the boundary coordinates in the asymptotically flat spacetime (for instance, the Schwarzschild black hole in the asymptotically flat background). Those coordinates are not that different from the flat space coordinates in the sense that we could take some ``limits" or conformal transformations. One could consider constructing algorithms to simulate dynamics in those spacetimes at the starting point.
\item We could focus on some well-defined tasks in quantum many-body systems and quantum field theories. For instance, we could temporarily forget how special relativity is emergent from some quantum many-body systems, and just pick some time slices equipped with some well-defined quantum physics machinery.
\end{itemize}
The freedom of temporary choices we could make does not mean that we will stop thinking about improvements to the framework. We feel that in the future, the following research directions will be valuable.
\begin{itemize}
\item Starting from continuum physics, we could think about boosting particles or changing coordinates in some specific systems. For instance, the Rindler transformation in the flat space or the black hole, or transforming from the conformal coordinate to the physical coordinate in cosmology. Moreover, in some models with weakly coupled gravity, we could compute the perturbative correction from quantum gravity to see if it could change the answer, to make the backreaction story we discuss before more precise.
\item Starting from discrete physics, we could think about how the light cone is emergent from many-body systems, and how it could be related to quantum complexity. To do this, we might consider spin models with a second-order phase transition as a starting point. We could somehow redefine the computational resource by concerning the energy costs. This might be related to the applications of quantum information theory in thermodynamics and resource theories (see a review in \cite{halpern2017toward}).
\end{itemize}

\newpage
\section{Comments on quantum simulation of cosmic reheating}\label{reheating}\label{RH}
The result presented in this paper is about cosmic inflation, but many techniques introduced in this paper could be directly applied to another problem in cosmology: cosmic reheating. Unlike cosmic inflation, the reheating process is naturally strongly-coupled. Thus, it is challenging to formulate some exact quantum field theory statements analytically in the reheating process, naturally motivating us to think about numerical investigations and possible computational speedup using quantum devices (see some references in cosmic reheating \cite{Abbott:1982hn,Dolgov:1982th,Albrecht:1982mp,Traschen:1990sw,Kofman:1994rk,Shtanov:1994ce}). The digital quantum simulation of cosmic reheating might deserve another lengthy paper to investigate. Here, we only point out our idea and leave this possibility for future research.

Inflation is a super ``cool" process where the universe is expanding with a very low temperature. At the end of inflation, inflaton condensates will decay to other particles. This process (preheating) is claimed to be violent and far from equilibrium, associated with non-perturbative phenomena, such as stochastic resonances. Finally, the universe will thermalize (reheating), setting up the scenario for big-bang nucleosynthesis. The process might also be associated with cosmological observations, such as CMB associated with observations of inflation, gravitational waves, magnetic fields, and baryon asymmetry.

Due to the non-perturbative nature of the (p)reheating process, it is challenging to make predictions only relying on perturbative analysis in quantum field theories. Several lattice simulation has been investigated to explore the process, although a full calculation in the setup of quantum field theory in the curved spacetime with a complete description in the Hilbert space is still in development \cite{Felder:2000hq}. Thus, we are facing a situation similar to the lattice gauge theory in the flat space: we might require quantum devices for our future computation to make reliable predictions. As a result, it is well-motivated to study cosmic reheating using quantum computers.

For instance, a typical model in the early study of reheating \cite{Abbott:1982hn} is to consider the following couplings between the inflaton and the matter fields
\begin{align}
S = \int {{d^3}} x dt\sqrt { - g} \left( { - \sigma \phi {\chi ^2} - h\phi \bar \psi \psi } \right)~.
\end{align}
Here $\chi$ and $\psi$ are scalar and fermionic decay products with the couplings $\sigma$ and $h$. Simple perturbative analysis shows that the decay rate is given by
\begin{align}
{\Gamma _{\phi  \to \chi \chi }} = O(\frac{{{\sigma ^2}}}{m})~,~~~~~{\Gamma _{\phi  \to \bar \psi \psi }} = O({h^2}m)~,
\end{align}
where $m$ is associated with the inflationary potential $V=m^{2} \phi^{2} / 2$. Using those formulas, one could study how matter contents are generated and estimate the temperature based on the above perturbative analysis. However, the analysis could only be trusted in a heuristic sense: the nature of the process is highly non-perturbative, and we cannot use perturbative quantum field theory techniques at tree level. Thus, a simple experiment we could start is to simulate the above process with large couplings using the digital quantum device.

The situation here is more similar to the original Jordan-Lee-Preskill setup, merging with our methods in the curved spacetime. The background scale factor should be different, so the mode functions are different in this process. But the quantizations of fields in the free theory are very similar. We could start from the free theory wave packet states of the inflaton and adiabatically turn on the interaction. Finally, we measure the decay rate of particles by evolving the state and do a similar post-selection.

The generalization from our algorithm to the above (p)reheating model should be straightforward, except now we have to encode the solution of fermions. In the flat space, it is challenging also to encode fermion fields in higher dimensions. In \cite{jordan2014quantum}, an algorithm about encoding fermions in 1+1 dimensions is established with the Bravyi-Kitaev and the Jordan-Wigner transformations. However, in the Trotter simulation, if the Hamiltonian is non-local after encoding (which is the situation of the naive Bravyi-Kitaev or Jordan-Wigner transformations), we do not have a good scaling in complexity (see \cite{childs2019theory} for a conclusion about $k$-local interactions). However, we are lucky to have other algorithms which could maintain locality (for instance, see the ``Superfast simulation of fermions" in Bravyi-Kitaev \cite{bravyi2002fermionic}, or \cite{Chen:2017fvr,Chen:2018nog,Chen:2019wlx}), suggesting novel ways of ordering, which preserves the locality in the bosonization process. This encoding should be not only useful for studying reheating, but also useful for fermionic quantum field theory simulations using quantum devices in the flat space. In the curved spacetime, several details, for instance, the spin connection in the lattice, need to be figured out. Furthermore, it might also be interesting to try analog simulations for cosmic reheating \cite{Chatrchyan:2020syc}.

\bibliographystyle{utphys}
\bibliography{refinflation}
\end{document}